\newtheorem{theorem}{Theorem}
\newtheorem{lemma}{Lemma}
\theoremstyle{definition}
\journal{Applied and Computational Harmonic Analysis}
\newcommand{\M}{\mathcal M}
\newcommand{\E}{\mathbb E}
\newcommand{\R}{\mathbb{R}}
\newcommand{\OO}{\mathcal{O}}
\DeclareMathOperator*{\argmin}{arg\,min}
\DeclareMathOperator*{\argmax}{arg\,max}
\newcommand{\RNum}[1]{\uppercase\expandafter{\romannumeral #1\relax}}
\begin{document}

\begin{frontmatter}



\title{Spectral Graph Filtering for Modality-Specific Representation Learning}


\author[label1]{Shira Yoffe}
\author[label2]{Amit Moscovich}
\author[label1]{Ariel Jaffe}

\affiliation[label1]{
  organization={Department of Statistics and Data Science, Hebrew University of Jerusalem},
  country={Israel}
}
\affiliation[label2]{organization={Department of Statistics and Operations Research, Tel Aviv University},
             country={Israel}}


\begin{abstract}
Multimodal datasets, where measurements are obtained from multiple sensors, have become central to many scientific domains. 
In unsupervised settings, most representation learning methods focus on identifying shared latent structures, such as clusters or continuous processes that appear across modalities.
However, some aspects of the data may be observed only through a single modality.
For example, in computational biology, certain cell-subtypes may appear in genetic profiles but not in epigenetic markers.
In this paper, we present DELVE, a spectral method for extracting modality-specific (differential) latent variables.
Our approach constructs a graph for each modality and leverages differences in their connectivity patterns to design a graph filter that attenuates shared signals while preserving modality-specific components.
We provide an asymptotic convergence analysis for our method under a product manifold model. To evaluate the performance of our method, we test its ability to recover differential latent structures in several synthetic and real datasets.
\end{abstract}










\begin{keyword}
Multimodal data analysis \sep Spectral embedding \sep Manifold learning \sep Graph signal processing \sep Dimensionality reduction 




\MSC[2020] 62H25 
62R30 

\end{keyword}

\end{frontmatter}









\section{Introduction}\label{Sec: Introduction}
Combining sets of high-dimensional measurements from different sources has become a key task across many fields. To name but a few examples, 
recent technological developments acquire datasets of gene expression, proteomic data, and spatial information at a single-cell level \citep{moses2022museum,specht2021single}. In neuroscience, recent studies analyze PET scans of specific proteins and fMRI scans to gain insight into the development of degenerative diseases \citep{tahmasian2016based}.
In single-particle cryogenic electron microscopy, multiple views of a single protein or other macromolecule are combined to recover the shape of the molecule and its manifold of motions \citep{BendoryBartesaghiSinger2020,ToaderSigworthLederman2023,MoscovichHaleviAndenSinger2020,MajiEtal2020, ZhongEtal2021}.
In these settings, often referred to as multi-modal, the same object (i.e., an individual cell in genetics or a patient in neuroscience) is observed by two or more sensor sets, where each set may generate thousands of variables \citep{lederman2018learning, chaudhuri2009multi, xu2013survey}. 
A common goal is to provide a low-dimensional representation for downstream analysis based on the observations from the different sensors.  

For a single-modality, the manifold assumption has been the basis of numerous data analysis methods developed in recent decades \citep{coifman2006diffusion,belkin2003laplacian}. According to this assumption, the high-dimensional observations $x_1,\ldots,x_n \in \R^\ell$
are assumed to lie near the image of a smooth mapping $T:\mathbb{R}^d \to \mathbb{R}^\ell$, parameterized by unknown vectors $\theta = (\theta_1, \ldots, \theta_n)\in \mathbb{R}^d$, such that,
\begin{align}
    \theta_i \xrightarrow{T} x_i\,,
\end{align}
where often $d \ll \ell$.
A main challenge of non-linear dimensionality reduction is to compute a low-dimensional representation of $x_i$ associated with the latent vectors $\theta_i$. 
Consider the setting where two sensors, A and B, take simultaneous snapshots of some object, yielding a collection of pairs $(x^A_i, x^B_i) \in \R^{\ell_A} \times \R^{\ell_B}$ where each pair contains two measurement modalities.
Similarly to the single-modality setting, we assume that each observation can be approximated by a continuous mapping applied to a set of latent variables,
\begin{equation}\label{eq:double_manifold}
(\theta_i, \psi_i^A) \xrightarrow{T_A} x_i^A \qquad (\theta_i, \psi_i^B) \xrightarrow{T_B} x_i^B.    
\end{equation}
The latent parameters $\theta$ represent degrees of freedom that both sensors A and B can measure. In contrast, the latent variables $\psi^A$ represent degrees of freedom that only $A$ can measure and $B$ is invariant to. The opposite holds for $\psi^B$.
For example, A and B might represent two different color channels of a digital camera. Both channels are sensitive to changes in brightness, but each one is blind to changes in color outside of its respective spectrum.
Figure \ref{fig:Toy example} illustrates our setting with a toy example from \cite{lederman2018learning}.
The left image depicts an observation from modality $A$, where the camera captures two rotating figurines: a Yoda and a bulldog. The right image presents an observation from modality $B$, in which the same bulldog figurine is shown alongside a rabbit doll. In this example, the shared latent variable $\theta$ is the rotation angle of the bulldog. The modality-specific variables $\psi^A$ and $\psi^B$ are the rotation angles of the Yoda figurine and the rabbit, respectively.
Many works focus on uncovering latent variables shared across modalities while suppressing modality-specific factors. However, modality-specific signals often contain valuable information about the observed object and can improve downstream tasks such as clustering and prediction. 
For example, in multi-omics data, a subset of cells that forms a single cluster in gene expression measurements (e.g., scRNA-seq) may split into two distinct subtypes in another modality, such as ATAC-seq. 
Motivated by this observation, our primary objective is to compute low-dimensional representations that explicitly capture latent structure unique to each modality, while disentangling it from shared structure. By isolating modality-specific components, we enable more precise characterization of complex systems and improve downstream analyses in unsupervised domains.



\begin{figure}[tb]
    \centering
    \begin{minipage}[c]{0.27\linewidth}
        \vspace{0pt}
        \centering
        \includegraphics[width=\linewidth]{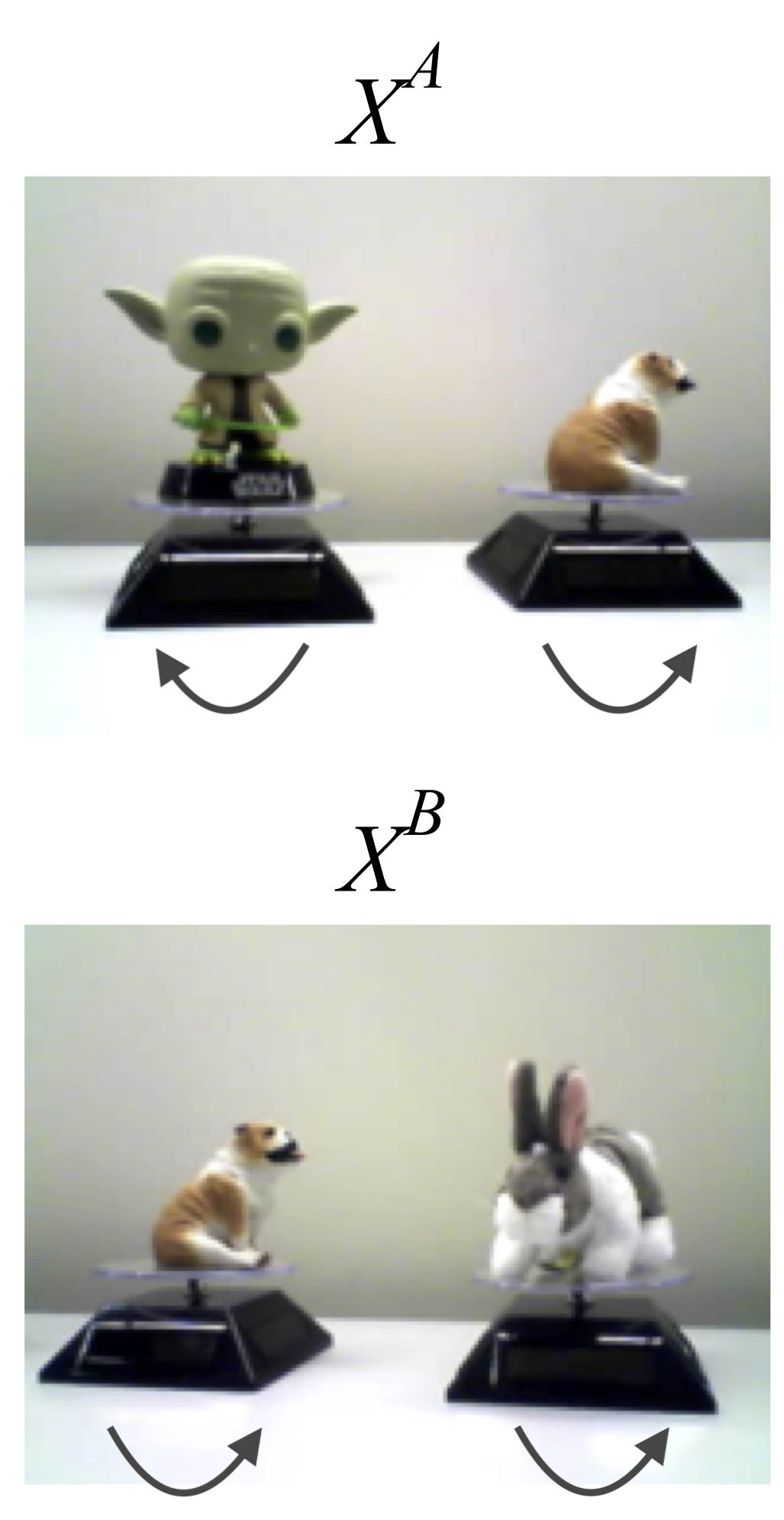}
    \end{minipage}
    \hspace{0.02\linewidth}
    \begin{minipage}[c]{0.69\linewidth}
        \vspace{0pt}
        \centering
        \includegraphics[width=0.9\linewidth]{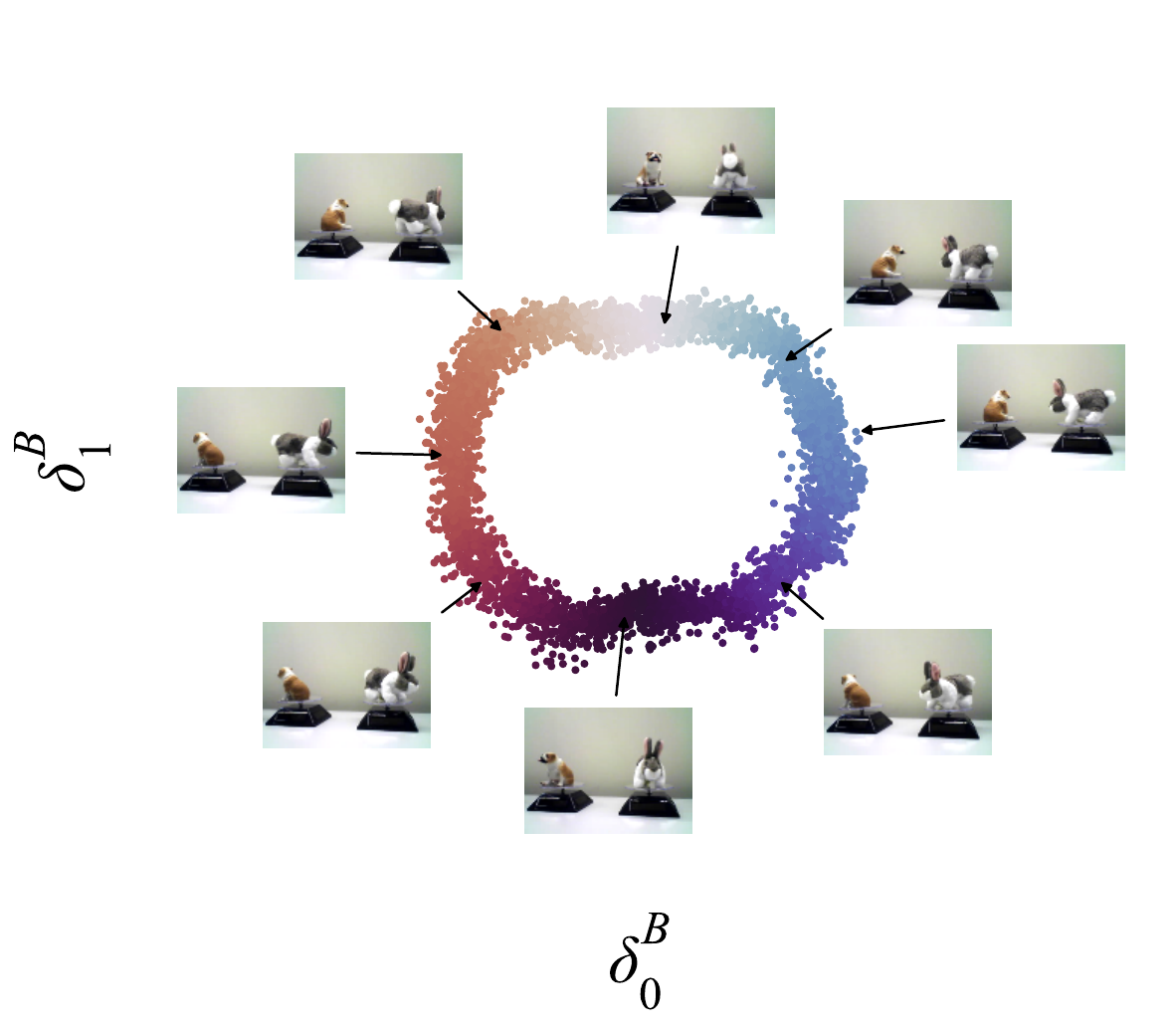}
    \end{minipage}

    \caption{Rotating dolls. \textbf{Left:} A single image from $X^A$ and $X^B$. 
    \textbf{Right:} Embedding using DELVE's two leading differential vectors, colored by the rabbit's rotation angle $\psi^B$.}
    \label{fig:Toy example}
\end{figure}



\subsection{Related work}

Many recent works have developed spectral methods to uncover latent variables shared across modalities. 
Classical approaches such as CCA and its kernel variants seek maximally correlated projections between modalities \citep{hotelling1992relations, akaho2001kernel, lai2000kernel}.
In addition, several manifold-based techniques construct a common representation by coupling geometric information from multiple modalities. Representative examples include jointly smooth functions across manifolds \citep{diethe2008multiview}, simultaneous diagonalization of Laplacians \citep{eynard2015multimodal}, and diffusion-based constructions such as local CCA kernels and locality alignment \citep{yair2016local, zhao2018multi}. Alternating Diffusion and subsequent refinements \citep{lederman2018learning, lindenbaum2020multi} also focus on recovering shared components while attenuating sensor-specific variation.

In contrast, relatively fewer spectral methods explicitly address the extraction of sensor-specific or differential variables. Classical discriminative approaches such as Linear Discriminant Analysis (LDA) \citep{fisher1936use} and its feature-selection extensions \citep{song2010feature, sharma2014feature} aim to identify separating directions in supervised settings. FKT transform \citep{fukunaga1970application} and its kernel extension \citep{li2007kernel} target differences between classes through spectral decompositions. Manifold-based methods have also been proposed for differential feature extraction: DiSC \citep{sristi2022disc} identifies feature groups that differentiate between conditions, while \citet{pmlr-v202-cohen23b} integrate Riemannian structure with spectral analysis for feature selection. Under the multi-view setting, \citet{shnitzer2019recovering} extend Alternating Diffusion \citep{lederman2018learning} to recover sensor-specific variables, and \citet{katz2020spectral} construct geodesic paths between modality-specific kernels to characterize their differences. Recent works have also designed VAE's to capture both sensor-specific and shared components in multimodal settings \citep{lee2021private}.

\paragraph{Main contributions and paper outline}
Given datasets from two modalities, we develop a simple, graph-based approach to uncover latent variables captured by one modality but invisible to the other. The method constructs a graph filter from one modality and applies it to a graph derived from the other.
We show that the eigenvectors of the filtered graph induce an embedding that isolates a single modality-specific latent variable.
We then describe an iterative procedure for recovering multiple latent variables.

On the theoretical side, we establish recovery guarantees under a product manifold model, characterizing the conditions under which our method provably identifies modality-specific latent variables.
Finally, we provide an extensive empirical evaluation on both simulated and real multimodal datasets, demonstrating that our approach reveals modality-specific structure that competing methods fail to detect.

The rest of the paper is organized as follows. In Section~\ref{Sec: Setting and background}, we present the problem setting and review the relevant background in manifold learning and graph signal processing. Section~\ref{Sec: Algorithm} introduces our algorithms for extracting a single latent variable and their extension to multiple variables. In Section~\ref{Sec: Analysis}, we provide a theoretical analysis of our method under a product manifold model. 
A computational complexity analysis is provided in Section ~\ref{sec:complexity}.
Finally, in Section~\ref{Sec: Examples}, we present results on both simulated and real multimodal datasets.

\begin{table}
\renewcommand{\arraystretch}{1.2}
\begin{tabularx}{\textwidth}{p{0.15\textwidth}p{0.62\textwidth} p{0.15\textwidth}}

\toprule
  {\bf Name} & {\bf Description} & {\bf Ref}\\\hline
  
  $n$ & Number of observations \\\hline
  
  $x_i^A$ & i-th observation in dataset A \\\hline

  ${\scriptstyle \theta}$   & Shared variable \\\hline
  ${\scriptstyle \psi^A,\psi^B}$ & Distinctive/differential variables of modalities $A$ and $B$ \\\hline
  
  ${\scriptstyle W^A, W^B}$ & Weight matrices of each graph,  & Eq. \eqref{eqn: gaussian kernel weights} \\\hline
  ${\scriptstyle P^A,P^B}$ & The symmetric operators & Eq. \eqref{Symmetric operator}\\\hline
  ${\scriptstyle L^A, L^B}$ & The symmetric Laplacian matrices & Eq. \eqref{eqn: graph Laplacian} \\\hline
  $v^A,v^B$ & Eigenvectors of ${\scriptstyle L^A, L^B}$\\\hline
  
  ${\scriptstyle \lambda^A, \lambda^B}$ & Eigenvalues of ${\scriptstyle L^A, L^B}$\\\hline

  ${\scriptstyle H(L^A), H(L^B)}$ & Graph filters & Eq.  \eqref{eqn:projection function}\\\hline

  ${\scriptstyle \Tilde{P}^A, \Tilde{P}^B}$ & The filtered operators & Eq. \eqref{eqn: filtered laplacian} \\\hline

  ${\scriptstyle \delta^A,\delta^B}$ & Differential vectors & Sections ~\ref{Sec: Algorithm - one variable} and~\ref{Sec: Algorithm - geneeralized}\\\hline


  $p$ & Uniform sampling density over the manifold ${\scriptstyle \M}$ &\\\hline
  
  ${\scriptstyle f^A, f^B} $ &  Eigenfunctions of the LB operators of ${\scriptstyle 
 \M^A,\M^B}$\\\hline

  $\eta^A,\eta^B$ &  Eigenvalues of the LB operators of ${\scriptstyle \M^A,\M^B}$\\\hline

  $\rho_X(f_k)$ &  The sampling operator, which evaluates $f_k$ at a set of points $X$\\\hline

  $\phi_k(X)$ & Normalized vector of samples & Eq. \eqref{eq: vector of samples}\\\hline

  $\| \cdot \|$ & Matrix operator norm or Euclidean vector norm & \\\hline

  $\| \cdot \|_F$ & Matrix Frobenius norm & \\
  
\bottomrule
\end{tabularx}
\caption{Notation table}
\end{table}

\section{Problem setting and related background}\label{Sec: Setting and background}

Consider two data sets $X^A \in \mathbb{R}^{n \times \ell_A}$ and $X^B \in \mathbb{R}^{n \times \ell_B}$, where the rows of both matrices consist of paired observations from two different sensors or modalities that satisfy the double manifold assumption of \eqref{eq:double_manifold}. The columns of $X^A$ and $X^B$ contain the features measured by the sensors A and B, respectively.
Our aim is to compute a low-dimensional representation that is only a function of the distinctive (or differential) latent parameters $\psi^A$ and $\psi^B$ and not the shared parameter $\theta$.
\subsection{Graph construction and spectral analysis}
Our approach is based on comparing two graphs, each constructed separately from one modality. 
\begin{align}
    G^A = (V, E^A, W^A), \qquad G^B = (V, E^B, W^B).
\end{align} 
The vertices $V = \{x_1,\ldots,x_n\}$ correspond to the $n$ observations and are the same in both graphs. The
weights $W^A, W^B \in \mathbb{R}^{n \times n}$ are computed separately by two kernel functions,
\begin{align}
    K^A: \mathbb{R}^{\ell_A} \times \mathbb{R}^{\ell_A} \rightarrow \mathbb{R},\qquad K^B: \mathbb{R}^{\ell_B} \times \mathbb{R}^{\ell_B} \rightarrow \mathbb{R}. 
\end{align}
A standard choice for the kernels $K^A, K^B$ is a symmetric Gaussian. 
\begin{align}
    \label{eqn: gaussian kernel weights}
    W^A_{i,j}
    =
    \exp \left( - \|x^A_i-x^A_j\|^2 / 2\sigma_A^2 \right),
    \qquad
    W^B_{i,j}
    =
    \exp \left( - \|x^B_i-x^B_j\|^2 / 2\sigma_B^2 \right).
\end{align}
Common variations include replacing the fixed bandwidth $\sigma$ with a density-dependent bandwidth~\citep{zelnik2004self}, replacing the Euclidean norm with other norms or metrics such as the  Mahalanobis distance or Wasserstein metric~\cite{KileelMoscovichZeleskoSinger2021}, and weights based on k-nearest-neighbor distances~\citep{ChengWu2021}.

Let $D^A$, $D^B$ be two diagonal matrices whose elements are equal to the degrees of the nodes in $G^A$, $G^B$, respectively, such that $D_{ii} = \sum_{j=1}^n{W_{ij}}$. We denote by $P^A$, $P^B$ the operators,
\begin{equation}
\label{Symmetric operator}
P^A = (D^A)^{-0.5}W^A (D^A)^{-0.5}, \quad \quad P^B = (D^B)^{-0.5}W^B (D^B)^{-0.5},
\end{equation}
and the symmetric normalized Laplacian matrices by,
\begin{equation}
    \label{eqn: graph Laplacian}
    L^A = I - P^A, \qquad  L^B = I - P^B.   
\end{equation} 
The key idea in this paper is that the degrees of freedom corresponding to the differential latent variables $\psi^A$ and $ \psi^B$ can be extracted by analyzing the differences in the connectivity patterns of $G^A$ and $G^B$. For analyzing the two graphs, we leverage tools from graph signal processing. 

\subsection{Graph signal processing and graph filters}

 Let us first describe some notations and definitions.
The signals that we consider are real functions $s: V \to \mathbb{R}$ defined on the vertices of the graph. Since the graph has $n$ vertices, we will identify them with vectors $s \in \R^n$.
 The symmetric normalized Laplacian matrix defined in Eq. \eqref{eqn: graph Laplacian} is positive semi-definite with at least one zero eigenvalue  \citep{shuman2013emerging, von2007tutorial}. We denote its eigenvalues in non-decreasing order by 
$\lambda_0 =0 \leq  \lambda_1 \leq \cdot\cdot\cdot \leq \lambda_{n-1}$ with corresponding eigenvectors
$v_i$ that form an orthonormal basis of $\mathbb{R}^n$.

The Laplacian matrix can be viewed as a smoothness functional for graph signals.
For the symmetric normalized Laplacian, we have,
\begin{equation}
    \label{eqn: Laplacian as differential operator}
     s^T L s = \frac{1}{2} \sum_{i,j = 1}^n W_{i,j} \bigg(\frac{s_i}{\sqrt{D_{i,i}}} - \frac{s_j}{\sqrt{D_{j,j}}}\bigg)^2.
\end{equation} 
Combining Eq. \eqref{eqn: Laplacian as differential operator} with the Courant-Fischer Theorem shows that the eigenvectors with the smallest eigenvalues are orthonormal minimizers of the RHS \citep[Theorem 4.2.11]{horn2012matrix},
\begin{equation}
    \label{eqn: Laplacian eigenvalues - courant-fisher}
     v_\ell = 
     \argmin_{\substack{ ||s||_2=1 ; \\ s \bot \{v_0,...,v_{\ell-1}\}}} 
     \frac{1}{2} \sum_{i,j = 1}^n W_{i,j} \bigg(\frac{s_i}{\sqrt{D_{i,i}}} - \frac{s_j}{\sqrt{D_{j,j}}}\bigg)^2.
\end{equation}
Eq. \eqref{eqn: Laplacian eigenvalues - courant-fisher} implies that for an eigenvector with a small eigenvalue, the $\sqrt{D}$-normalized values of neighboring vertices are close.
This is analogous to classic Fourier analysis, where eigenfunctions with low eigenvalues have low spatial frequencies.
Continuing this analogy, one can define the \textit{Graph Fourier Transform} $\hat{s}$ of a vector $s\in \mathbb{R}^{n}$, and its inverse counterpart, the \textit{Graph Inverse Fourier Transform} by
\begin{equation}
    \label{eqn: graph Fourier transform}
    \hat{s}_\ell = \langle v_\ell,s\rangle,
    \qquad 
    s = \sum_{\ell=0}^{n-1} \hat{s}_\ell v_\ell.
\end{equation}
Eq. \eqref{eqn: graph Fourier transform} provides a new representation for the graph signals in the spectral domain. Thus, one can define \textit{graph spectral filtering} by including eigenvalue-based weights,

\begin{equation}
    \label{eqn: graph spectral filtering}
    H(s) = \sum_{\ell=0}^{n-1} \hat{s}_\ell h(\lambda_\ell) v_\ell.
\end{equation}
For example, the low-pass filter $h(\lambda_\ell;\tau) = \mathbf{1}_{\lambda_l \leq \tau}$ removes the high-frequency components of a signal and thus acts as a smoothing filter. In the following section, we design graph high-pass filters to discover latent variables that differentiate between two datasets.

\section{DELVE: Differential Latent Variables Extraction}\label{Sec: Algorithm}

In this section, we introduce \textit{DELVE}, a spectral algorithm for uncovering modality-specific latent variables in multimodal data.
We first describe the algorithm for computing an embedding that is a function of a single sensor-specific latent variable (Section~\ref{Sec: Algorithm - one variable}) and then extend it to multiple latent variables (Section~\ref{Sec: Algorithm - geneeralized}).

\subsection{Embedding sensor-specific latent variable}\label{Sec: Algorithm - one variable}
\begin{figure}[tb]
   \includegraphics[width=0.99\linewidth]{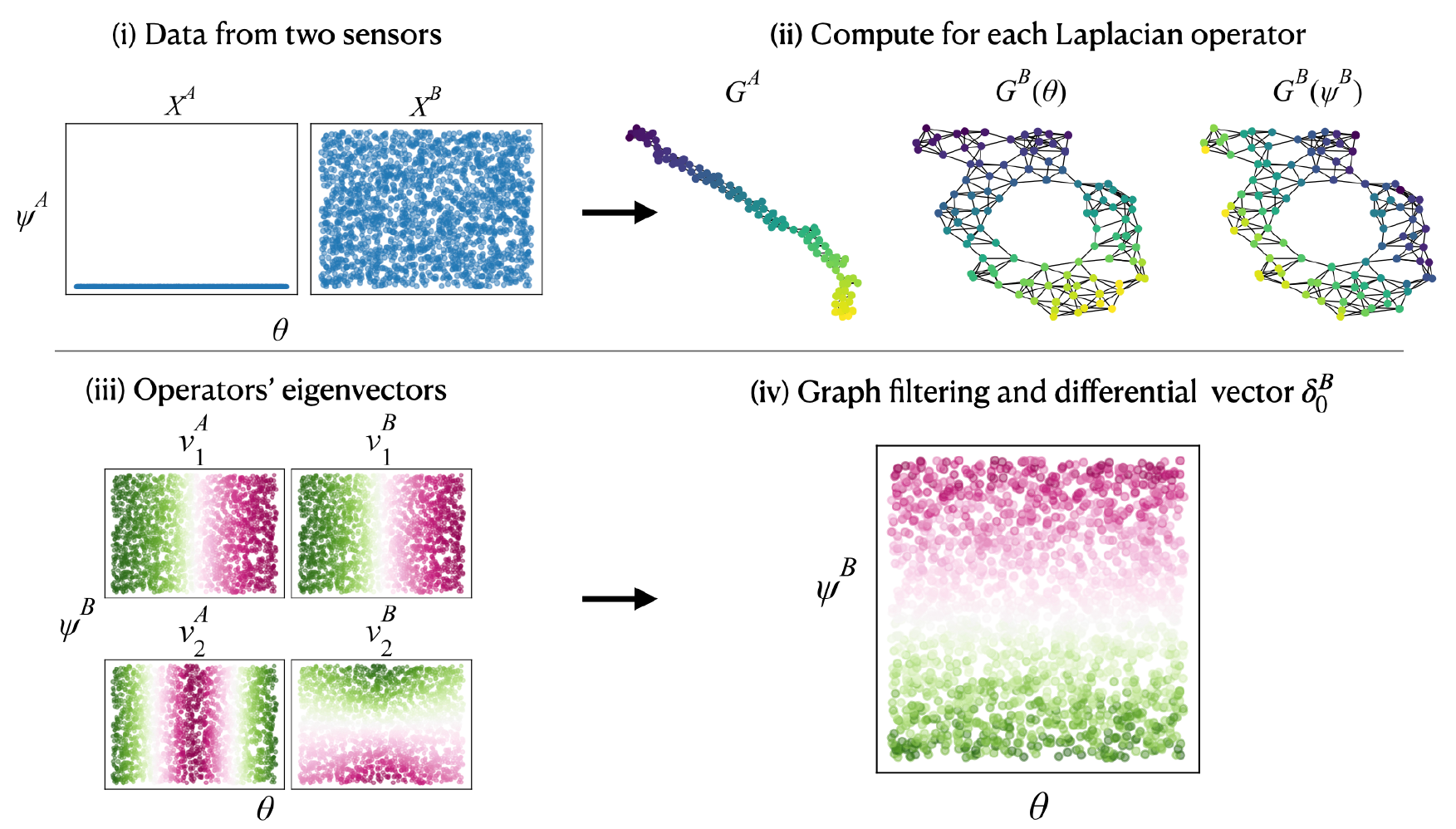}
   \caption{Illustration of Algorithm \ref{alg:description} using the line ($X^A$) vs. rectangle ($X^B$) example. (II) $G^A$ with nodes colored by $\theta$, and $G^B$ with nodes colored by $\theta$ and $\psi^B$ (III) Top: Points colored by the eigenvectors of $P^A$. Bottom: Points colored by the eigenvectors of $P^B$.
   (IV) Points colored by the leading differential vector $\delta_0^B$.}
   \label{fig:Eigenvectors of L^A and L^B}
\end{figure}

Our approach is based on uncovering differences between the spectra of the graph representations of the two modalities.
We compute the operators $P^A, P^B$ via Eq. \eqref{Symmetric operator}, the symmetric  Laplacian matrices, $L^A, L^B$ by Eq. \eqref{eqn: graph Laplacian}, and their eigenvalues and eigenvectors, denoted respectively by $(\lambda_i^A,v^A_i)$ and $(\lambda_i^B,v^B_i)$.
To compute an embedding that is a function of $\psi^B$, we design a filter   
that attenuates graph signals of the shared variable $\theta$ from $P^B$, while maintaining the signal of $\psi^B$.
The filter is based on the eigenvalues and eigenvectors of $L^A$.


Let $h(\lambda): [0,1] \to [0,1]$ denote a non-decreasing function in $\lambda$. For simplicity, consider an example where $h(\lambda)$ is equal to the step function
\begin{equation}\label{eq:threshold_filter}
h(\lambda) = \begin{cases}
    1, &  \lambda> \tau.\\
    0, & \text{otherwise}.
  \end{cases}
\end{equation}
We denote by $H(L^A)$ and $H(L^B)$ a filtering operator defined by,
\begin{equation}
    \label{eqn:projection function}
    H(L^A) =   \sum_i h(\lambda^A_i) v^A_i(v^A_i)^T, \qquad \qquad
    H(L^B) = \sum_ih(\lambda^B_i)v^B_i (v^B_i)^T.
\end{equation}
For the step function filter in Eq.~\eqref{eq:threshold_filter}, applying $H(L^A)$ to a vector $y$ attenuates any component in $y$ correlated with the leading eigenvectors of $L^A$. Our next step is to compute the filtered operators $\Tilde{P}^B,\Tilde{P}^A$ by, 
\begin{equation}
    \label{eqn: filtered laplacian}
    \Tilde{P}^B = H(L^A) P^B H(L^A),
    \qquad \qquad
    \Tilde{P}^A = H(L^B) P^A H(L^B).
\end{equation}
We denote by $\delta^A$ and $\delta^B$ the leading eigenvectors of $\Tilde{P}^A$ and $\Tilde{P}^B$, respectively. We refer to $\delta^A$ and $\delta^B$ as the  \textit{differential vectors} of $P_A$ and $P_B$, respectively.

To illustrate our approach, we use a simple example where $x^B$ contains $n$ observations sampled uniformly from a rectangle, such that $x^B_i \in [0, a] \times [0, b]$ and $x^A_i$ contains only the first coordinate of $x^B_i$, lying in $[0, a]$. 
In this toy example, the transformations $(\theta_i,\psi^B_i) \to x^B_i$ and $\theta_i \to x^A_i$ are both the identity mapping.  
Figure~\ref{fig:Eigenvectors of L^A and L^B} provides an illustration of our filtering process for this example. Panel (ii) shows the two graphs computed from $X^A$ and $X^B$, respectively. The vertices of $G^A$ are colored by  $\theta$, and the vertices of $G^B$ are colored by $\theta$ (left) and $\psi^B$ (right). 
Panel (iii) shows a scatter plot, where the two coordinates of each point $i$ are equal to $\theta_i$ (first element of $x^B_i$) and $\psi^B_i$ (second element of $x^B_i$). Each point is colored according to its value in the two leading eigenvectors of $P^A$ (upper row) and $P^B$ (bottom row). As expected, the eigenvectors of $X^A$ depend only on $\theta$. In contrast, the second eigenvector of $X^B$ depends on $\psi^B$. 
In panel (iv), the points are colored according to $\delta_0^B$, the leading eigenvector of $\Tilde{P^B}$. The vector clearly encodes the differential vector $\psi^B$.

For an additional illustration, recall the two-figurine example presented in the introduction. The right panel in Figure~\ref{fig:Toy example} shows a scatter plot, where the two coordinates of each image are equal to the corresponding elements in $\delta_0^B$ and $\delta_1^B$. The points are colored according to the rabbit's rotation angle, a modality-specific variable.
Algorithm~\ref{alg:description} summarizes the steps for extracting a single latent variable. 

\paragraph{Choice of threshold $\tau$} The choice of $\tau$ for the filtering operator $H(L^B)$ should be guided by the spectrum of $L^B$. 
Recall that the eigenvalues of $L^B$ equal one minus the eigenvalues of $P^B$. In our experiments, $\tau$ was chosen so  that the low-frequency components account for 90\% of the  sum of eigenvalues of $P^B$.
Formally,
\begin{equation}\label{eq:set_tau}
\tau = \min_{\tilde \tau} \sum_{i;\lambda_i^B \leq \tilde \tau} (1-\lambda_i^B) \geq 0.9\sum_{i=1}^n (1-\lambda_i^B)   
\end{equation}
Assuming $L^B$ exhibits a sufficient spectral decay, setting $\tau$ according to Eq. \eqref{eq:set_tau} significantly attenuates signals that correlate with $\theta$, with negligible effect on $\psi^B$.


\begin{algorithm}[tb]
    \caption{Recovering an embedding for a sensor-specific latent variable}
	\label{alg:description}
 \textbf{Input:} $X^A \in \R^{n \times \ell_A},X^B \in \R^{n\times \ell_B}$ two matrices containing $n$ samples with $\ell_A$ and $\ell_B$ 
 features respectively. 
             A filter function $h(\lambda):[0,1] \to [0,1]$. 
             \\[0.2cm]
    \textbf{Output:} Differential vectors $\delta^A, \delta^B \in \mathbb{R}^n$.\\[-0.3cm]
	\begin{algorithmic}[1] 
		\STATE Compute the weight matrices $W^A,W^B$ from $X^A, X^B$ using Eq.~\eqref{eqn: gaussian kernel weights}.
        \STATE Calculate the operators $P^A, P^B$ by Eq. \eqref{Symmetric operator} and the symmetric normalized Laplacian matrices $L^A, L^B$ by Eq. \eqref{eqn: graph Laplacian}.
        \STATE Obtain the filtering matrices $H(L^A), H(L^B)$ via Eq.~\eqref{eqn:projection function}.
        \STATE Compute the filtered operators $\tilde P^A,\tilde P^B$ according to Eq. \eqref{eqn: filtered laplacian}. 
        \STATE Compute the differential vector $\delta^A$ from $\tilde P^A$ and $\delta^B$ from $\tilde P^B$.
\end{algorithmic}
\end{algorithm}

\subsection{Embedding multiple sensor-specific latent variables}\label{Sec: Algorithm - geneeralized} 

In the previous section, we developed a graph-based method to compute a one-dimensional embedding that encodes a sensor-specific latent factor. Here, we extend this approach to compute a multidimensional embedding for each observation, where each coordinate encodes a different sensor-specific factor.
We begin with an example showing why one cannot naively compute multidimensional embeddings from the leading eigenvectors of $\Tilde P^A$ and $\Tilde P^B$. We then derive an iterative procedure that constructs embeddings whose coordinates each contribute new, non-redundant information about the observations.

Consider the case where 
$X^B$ contains $n$ observations from a 3D cube, $x^B_i \in [0, a] \times [0, b] \times [0, c] $, where $c< \min (a,b)$. Similarly to the previous example, $x^A_i$ contains only the first coordinate of $x^B_i$. Thus, $\theta$ is the first coordinate and $\psi^B_0$ and $\psi^B_1$ are the second and third coordinates. Assuming a large number of samples, $\delta_0^B$ converges to a function of $\psi^B_0$. However, the shape of the next eigenvector $\delta_1^B$ depends on the cube dimensions $(a,b)$ and $c$ and may not necessarily be a function of $\psi^B_1$. For example, $\delta_1^B$ may correspond to a mixed eigenfunction involving both $\psi^B_0$ and $\theta$, or to a higher-frequency mode of $\psi^B_0$.

Thus, an additional step is required to ensure that: (i) the next differential vector is associated only with a single differential variable, and (ii) that it is not redundant. i.e., it is associated with new variables that were not already learned.
To achieve this, we introduce an iterative method that computes, at each iteration, a new embedding for the $n$ observations that depends on $\theta$ and the previously computed sensor-specific embedding. 
Assume that we computed a one-dimensional embedding that encodes $\psi^B_1$. 
To compute an embedding that depends on $\theta$, 
we propose a similar approach to \cite{lederman2018learning}, where the embedding is computed by the leading $k_0$ eigenvectors of the symmetric product of $P^A$ and $P^B$,
\begin{equation}
    \label{eqn:projection of LA onto LB} 
    P^{\theta} = P^A P^B + P^BP^A.
\end{equation} 
We denote this embedding by $V^{(0)} \in \R^{n \times k_0}$, where $k_0$ is set according to the spectrum decay of $P^{\theta}$.
Next, we concatenate $V^{(0)}$ with $\delta^B_0$ to obtain $V^{(1)} \in \R^{n \times (k_0+1)}$. The rows of $V^{(1)}$ constitute vector embeddings that encode $\theta$ and $\psi_0^B$. 
Thus, one can think of $V^{(1)}$ and $X^B$ as a new multimodal dataset, where the shared variables are $(\theta,\psi^B_0)$ and the sensor-specific variable is $\psi^B_1$. To obtain a one-dimensional embedding for $\psi^B_1$, we apply Algorithm~\ref{alg:description} with $V^{(1)}$ and $X^B$ as inputs. Note that here we apply a one-sided version of Algorithm \ref{alg:description} that filters $X^B$ to obtain  $\delta^B$ without computing $\delta^A$. Computing additional sensor-specific variables (such as $\psi^B_2$) can be done iteratively in a similar way. The steps for applying our iterative approach are outlined in Algorithm \ref{alg:description 2}. Fig. \ref{fig: eigenvectors of L^A and L^B 3D cube} provides an illustration of this algorithm based on the cube-vs-line example.
The left panel shows the leading eigenvectors of the $\tilde P^B$ obtained by algorithm \ref{alg:description}. Importantly, $\delta_1^B$ and $\delta_2^B$ are functions of $\theta$ and the $\psi^B_0$, and do not provide new information. The right panel shows the differential vectors obtained by the iterative approach of algorithm \ref{alg:description 2}. The two vectors are functions of $\psi^B_0$ and $\psi^B_1$.

\begin{algorithm}[tb]
	\caption{Embedding multiple sensor-specific latent variables}
	\label{alg:description 2}
    \textbf{Input:} 
    Matrices $X^A \in \R^{n \times \ell_A}$ and $X^B \in \R^{n\times \ell_B}$, containing $n$ samples with $\ell_A$ and $ \ell_B$ features respectively.
    $K$ number of iterations.
    Filter function $h(\lambda):[0,1] \to [0,1]$.
    $k_0$ dimension of shared latent space.\\[0.2cm]
    \textbf{Output:} Differential vectors $\delta^{B_1},\ldots, \delta^{B_{K}} \in \R^{n}$.\\[-0.3cm]
\begin{algorithmic}[1]
            \STATE Calculate the differential vector $\delta^B_0$ via Algorithm~\ref{alg:description}
            , and denote as $\delta^{B_0}$.\\[-1em]
            \STATE Using Eq. \eqref{eqn:projection of LA onto LB}, compute $P^\theta$ and its leading eigenvectors $V^{(0)} \in \R^{n \times k_0}$.
            \FOR{$i \in\{1,\ldots,K\}$}
            \STATE Obtain $V^{(i)} \in \R^{n \times (k_0 +i)}$ by concatenating the matrix $V^{(i-1)}$ with the column vector $\delta^{B_{i-1}}$.
            \STATE Compute $\delta^{B_i}$ by applying Algorithm~\ref{alg:description} with inputs $X^B$ and $V^{(i)}$ (substituting for $X^A$).
            \ENDFOR
\end{algorithmic}
\end{algorithm}

\begin{figure}[tb]
   \centering 
   \includegraphics[width=0.96\linewidth]{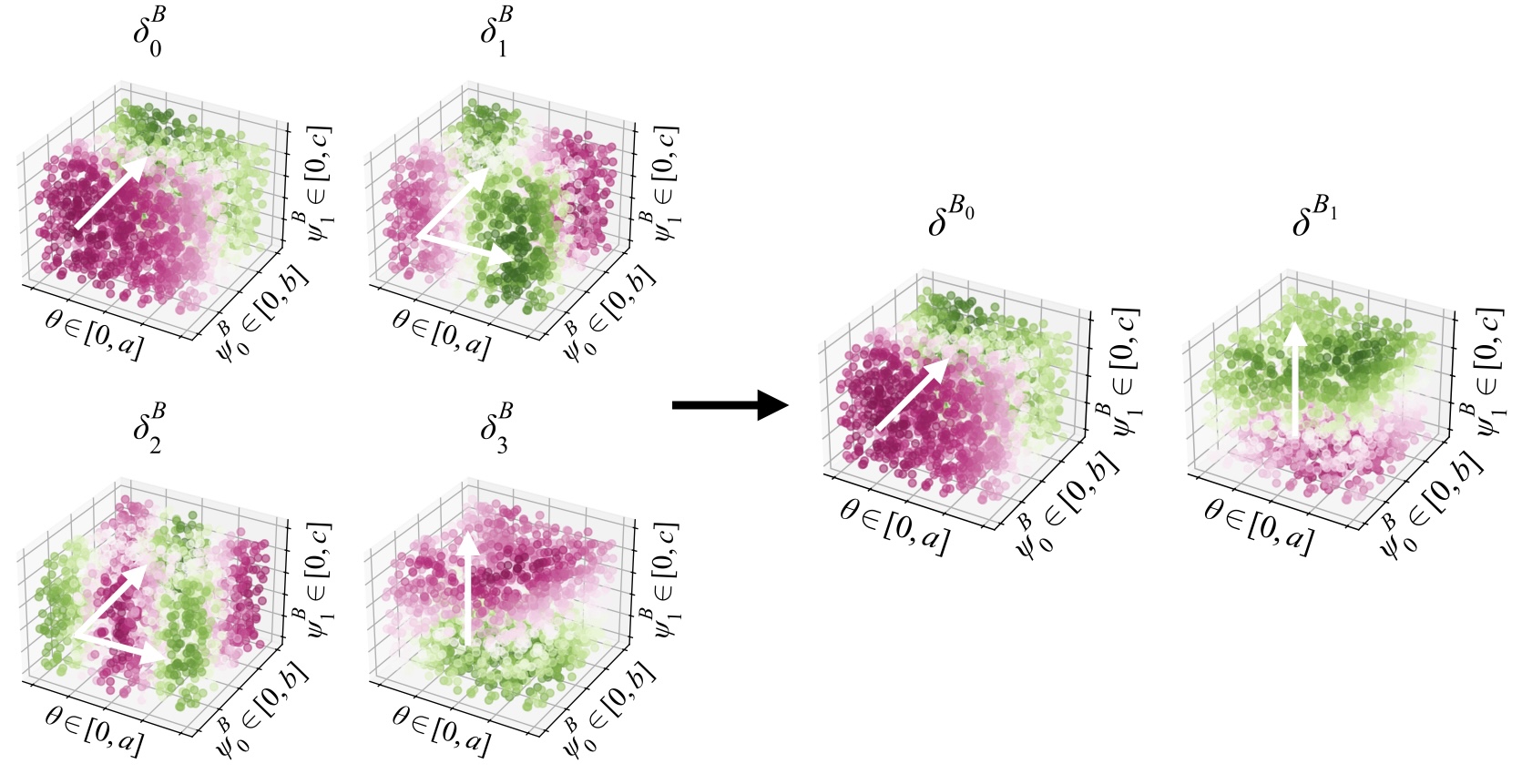}
   \caption{
   Line $(X^A)$ vs. cube $(X^B)$ example. The $\theta$ coordinate is shared between the two datasets. The second and third coordinates $\psi^B_1,\psi^B_2$ are unique to $X^B$. Each point in the scatter plots presents an observation in $X^B$. \textbf{Left panel:} The observations in $X^B$ are colored according to the values of its corresponding entries in the leading differential vectors. \textbf{Right panel:} The observations in $X^B$ colored according to the values of its corresponding entries in the iterative differential vectors $\delta^{B_0}$ and  $\delta^{B_1}$. 
   In all figures, the white arrow points to the direction of the dominant parameter.}
   \label{fig: eigenvectors of L^A and L^B 3D cube}
\end{figure}


\section{Analysis}\label{Sec: Analysis}

We begin with two sections that provide background for the analysis. Section \ref{sec:convergence} reviews existing results on the convergence of Laplacian eigenvectors under the manifold model. Section \ref{sec:product_manifold} introduces the product manifold model and establishes notation and basic properties. In addition, it states the assumptions underlying our analysis of the multi-modal setting. In Section \ref{sec:convergence_delta}, we present the convergence results and a proof outline. The full proof is deferred to the appendix.

\subsection{Convergence of the graph Laplacian eigenvectors}\label{sec:convergence}
In recent years, several works have derived convergence results, under the manifold setting, of the eigenvectors of the discrete Laplacian to eigenfunctions of the Laplace-Beltrami (LB) operator \citep{von2008consistency, singer2017spectral, garcia2020error,wormell2021spectral,dunson2021spectral, calder2022improved}.
For example, \cite{cheng2022eigen} derived an $l_2$ norm convergence result for the normalized and unnormalized Laplacian matrices,  under the following assumptions:
\begin{enumerate}[label=(\roman*)]
    \item The $n$ observations are drawn independently according to a uniform distribution over a $d$-dimensional manifold $\M$.
    \item The smallest $K$ eigenvalues of the Laplace-Beltrami (LB) operator over $\M$ have single multiplicity, with a minimal spectral gap $\gamma_K>0$.
    \item The graph weights are computed using a Gaussian kernel with a bandwidth parameter $\epsilon_n \to 0^{+}$ that satisfies $\epsilon_n^{d/2+2} > C_K \frac{\log n}{n}$ for some constant $C_K$.
\end{enumerate}
Let $dV$ denote the Riemannian volume form of $\M$ and let the constant
$p = 1/\int_\M dV$ denote the uniform sampling density.
We denote by $\eta_k$ the $k$\textsuperscript{th} smallest eigenvalue of the Laplace-Beltrami operator and by $f_k$ a corresponding normalized eigenfunction, where
\begin{align}
    \int_\M f_k(x)^2 p dV(x) = 1. 
\end{align}
We denote by $\rho_X (f_k) \in \R^n$ the operator that evaluates the function $f_k$ at a point set $X = \{x_1, \ldots, x_n \} \subseteq \M$, and by $\phi_k(X) \in \R^n$ the vector of samples,
\begin{align}
    \label{eq: vector of samples}
    \phi_k(X) = \frac{1}{\sqrt{p n}}\rho_X (f_k).
\end{align}
In addition, we denote by $v_k \in \R^n$ the eigenvector of the symmetric-normalized graph Laplacian that corresponds to the $k$-th smallest eigenvalue. Similarly, we denote by $v^{(rw)}_k$ the eigenvectors of the random walk Laplacian  $L^{(rw)} = I - D^{-1}W$.
The following convergence guarantee for $v^{(rw)}_k$ was proved by Cheng and Wu: 
\begin{theorem}[Theorem 5.5, \cite{cheng2022eigen}]\label{thm:convergence_rw}
For $n \to \infty$ and under assumptions (i)-(iii), w.p. $>1-4K^2n^{-10} - (4K+6)n^{-9}$, the k\textsuperscript{th}  eigenvector of the random-walk Laplacian satisfies
\begin{equation}\label{eq:eigenvectors_product}
   \|v_k^{(rw)} - \alpha \phi(X)_k\|_2
   =
   \OO\left(\epsilon_n + \epsilon_n^{-d/4-1/2}\sqrt{\log n/n}\right),
\qquad \forall k \leq K,
\end{equation}
where $v_k^{(rw)}$ is $D$-normalized such that $(v_k^{(rw)})^T D v_k^{(rw)}/(n p) = 1$, $\epsilon_n$  is the bandwidth parameter used for the Gaussian affinity, equal to $\sigma^2/2$ in our notation \eqref{eqn: gaussian kernel weights},
and $|\alpha| = 1+o_p(1)$. 
\end{theorem}
For our analysis, the convergence rate of $\alpha$ must be made explicit.
Lemma \ref{lem:alpha_convergence} in the supplementary materials shows that $|\alpha|-1$ converges to zero at the same rate as in Eq. \eqref{eq:eigenvectors_product}. 
Based on this theorem and a concentration result for the degree matrix, a similar concentration bound follows for the eigenvectors of the symmetric normalized Laplacian matrix:
\begin{theorem}\label{thm:convergence_symmetric_normalized}
For $n \to \infty$, under assumptions (i)-(iii),  w.p. $>1-4K^2n^{-10} - (4K+8)n^{-9}$, the $k$\textsuperscript{th} eigenvector $v_k$ of the symmetric-normalized Laplacian satisfies
\begin{equation} \label{eq:vksym_conv}
    \|v_k - \alpha \phi_k(X)\|_2
    =
    \OO\left(\epsilon_n + \epsilon_n^{-d/4-1/2}\sqrt{\log n/n}\right),
\qquad \forall k \leq K,    
\end{equation}
where $\|v_k\|=1$ is normalized and $|\alpha|=1+o_p(1)$.
\end{theorem}
The proof is in Section~\ref{sec:convergence_proof_vksymm}.
Theorems \ref{thm:convergence_rw} and \ref{thm:convergence_symmetric_normalized} are concerned with the convergence of graph eigenvectors computed using a single dataset. However, in this paper, we are interested in the case of two datasets, assuming that each was generated by random sampling from a product manifold. We start by providing the necessary notations and definitions for this case.

\subsection{The product manifold model}\label{sec:product_manifold}
Let $\M_1,\M_2$ be two manifolds, and let $\M = \M_1 \times \M_2$ denote the product manifold, such that every point $x \in \M$ is associated with a pair of points $x_1 \in \M_1$ and $ x_2 \in \M_2$. We denote by $\pi_1: \M \to \M_1$ and $\pi_2: \M \to \M_2$ the canonical projections of a point in $\M$ to its corresponding points in $\M_1,\M_2$, respectively. 
Let $f_1: \M_1 \to \mathbb{R}$ be a real function on the manifold $\M_1$. The canonical projection operator can be used to define a function $f: \M \to \mathbb{R}$ known as the pullback of $f_1$,
\begin{align}
    f(x) = f_1(\pi_1(x)).
\end{align}
In this work, we consider three disjoint sets of latent variables: $\psi^A$ (latent variables unique to $A$), $\psi^B$ (latent variables unique to $B$), and $\theta$ (common latent variables). Formally, we assume the existence of three latent manifolds $\M_1,\M_2,\M_3$ that are smooth transformations of the three sets of variables, $\psi^A,\psi^B,\theta$, respectively. 
The two sets of observations $X^A, X^B$ are sampled, respectively, from two product manifolds $\M_A$ and $\M_B$, where
\begin{align}
    \M_A = \M_1 \times \M_3, \qquad \M_B = \M_2 \times \M_3.
\end{align}
We assume that the observations in $X^A$ and $X^B$ are generated according to the following steps: (i) For each $i$, the latent variables $\theta_i, \psi^A_i, \psi^B_i$ are drawn independently.
(ii) The observations $x_i^A, x_i^B$ are computed according to 
\begin{align}
    (\theta_i, \psi_i^A) \xrightarrow{T_A} x_i^A, \qquad (\theta_i, \psi_i^B) \xrightarrow{T_B} x_i^B,
\end{align}
where $T_A$ and $T_B$ are smooth maps.
We denote by $f^{(j)}_i: \M_j \to \mathbb{R}$ an $i$\textsuperscript{th} eigenfunction of the Laplace-Beltrami operator of the manifold $\M_j$.
The eigenfunctions of the products $\M_A, \M_B$ are equal, up to a constant,  to the pointwise product of the eigenfunctions of $\M_1,\M_2,\M_3$. See Theorem 2 in \cite{zhang2021product} and the discussion on warped product manifolds in \cite{pmlr-v221-he23a}. That is, for any pair  of functions $f^{(j)}_\ell$ and $f^{(j')}_k$ that are eigenfunctions, respectively, of $\M_j$ and $\M_{j'}$, the function $f^{(j)}_\ell(\pi_{j}(x))f^{(j')}_k(\pi_{j'}(x))$ is an eigenfunction of the product manifold $\M_j \times \M_{j'}$. Thus, one can index the eigenfunctions of the product manifold according to the index of the corresponding components $(k,l)$.
\begin{equation}\label{eq:eigenfunctions_product}
f^{A}_{l,k}(x) \sim f^{(1)}_l(\pi_1(x)) \cdot f^{(3)}_k(\pi_3(x)), \qquad f^{B}_{m,k'}(x) \sim f^{(2)}_m(\pi_2(x)) \cdot f^{(3)}_{k'}(\pi_3(x)).
\end{equation}
For the three manifolds $\{\M_j\}_{j=1}^3$, we denote by $\eta^{(j)}_i$ the $i$\textsuperscript{th} smallest eigenvalue, that corresponds to the eigenfunction $f_i^{(j)}$.
We denote by $\eta^A_{l,k},\eta^B_{l,k}$ the $(l,k)$-th smallest eigenvalue of the products $\M_A,\M_B$, as ordered by the following sums,
\begin{equation}\label{eq:eigenvalues_product}
\eta^{A}_{l,k} = \eta^{(1)}_l + \eta^{(3)}_k \qquad \qquad\eta^{B}_{m,k'} = \eta^{(2)}_m + \eta^{(3)}_{k'}\,.
\end{equation}
The $m$\textsuperscript{th} smallest eigenvalue has the subscript $(l,k)$ if $\eta^A_{l,k}$ is the $m$\textsuperscript{th} smallest eigenvalue of $\M_A$. 
We denote by $v^A_{l,k}, v^B_{m,k'}$ the eigenvectors of the symmetric normalized Laplacian matrices $L^A,L^B$.
Similarly to the case of a single manifold, we denote by $\rho_X(f^A)$ the sampling operator that evaluates a function $f^A$ at the points of $X^A$. In addition, 
we denote by $\rho_{\pi_i(X)}(f^{(i)})$ the sampling operator that takes a  function $f^{(i)}$ on $\M_i$ and evaluates it at the points $\pi_i(x_1), \ldots \pi_i(x_n)$.
We denote by $p_A,p_B$ the sampling density over the manifolds $\M_A,\M_B$. 
\subsection{Convergence of differential vectors}\label{sec:convergence_delta}
Consider the eigenvectors of $L^A$ and $L^B$, denoted by $v_{l,k}^A$ and $v_{l,k}^B$, respectively. The following lemma, which forms the basis of our analysis, shows that $v_{l,k}^A$ and $v_{m,k'}^B$ are nearly orthogonal.
\begin{lemma}\label{lem:concentration} Let $\M_A = \M_1 \times \M_3$ and $\M_B = \M_2 \times \M_3$. Let $v^A_{l,k},v^B_{m,k'}$ be, respectively, the $(l,k)$-th and $(m,k')$-th unit-length eigenvectors of the Laplacian matrices $L^A,L^B$.
We assume that the corresponding eigenvalues $\eta^A_{l,k}$ and $\eta^B_{m,k'}$ are both within the $K$ smallest eigenvalues of their respective spectra.
Under assumptions (i)-(iii), as $n \to \infty$ the following holds  with probability $>1-12K^2 n^{-10} - (8K+16)n^{-9}$, 
\begin{equation}\label{eqn:orthogonality of eigenvectors}
    |(v^A_{l,k})^Tv^B_{m,k'}| = \begin{cases}
    1 -\OO\bigg( \epsilon_n + \sqrt{\frac{\log n}{n \epsilon_n^{d/2+1} }}\bigg), & \text{if $k=k'$ and $l=m=0$}.\\
    \OO\bigg( \epsilon_n + \sqrt{\frac{\log n}{n \epsilon_n^{d/2+1} }}\bigg), & \text{otherwise}.
  \end{cases}   
  \end{equation}
\end{lemma}
Lemma \ref{lem:concentration} proves that the eigenvectors of the Laplacian matrices, excluding those that are associated only with the shared variable $\theta$, are almost orthogonal. The lemma is the main building block for the proof of our convergence guarantee derived for  Algorithm~\ref{alg:description}. 
For simplicity, in our proof, we set the filter $H(L^A)$ to be a threshold function with parameter $\tau$ as defined in Eq. \eqref{eq:threshold_filter}. 
Given a matrix $V^A$ of size $n\times K_{A}$ whose columns consist of the eigenvectors of $L^A$ with eigenvalues smaller than $\tau$, the filter matrix $H(L^A)$ is equal to the following projection matrix, which we denote by $ Q^A=I-V^A (V^A)^T$. 
In addition, for the analysis, we make one modification to Algorithm~\ref{alg:description}. 
In step 4 of the algorithm, we apply a low-pass filter to $P^B$ and attain $P^B_\tau$ given by
\begin{align}
    P^B_\tau = \sum_{l,k; \lambda^B_{lk} \leq \tau} (1-\lambda^B_{l,k}) v_{l,k}^B (v_{l,k}^B)^T.
\end{align}
Thus, the operator computed in step 4 of the algorithm is equal to
$
\tilde P_A = Q^A P^B_\tau Q^A.
$
We denote by $K_B$ the number of eigenvectors in $L_B$ whose eigenvalue is smaller than $\tau$, and by $K = \max(K_A, K_B)$.
Let $\eta_{1,0}^B$ be the first eigenvalue whose corresponding eigenfunction is associated with a differential variable. We assume that the chosen threshold is sufficiently large such that $\tau > \eta_{1,0}^B$. 
The following theorem gives the convergence rate of the leading differential vector $\delta^B$ to the eigenfunction $f_1^{(2)}$ of $\M_2$.


\begin{theorem}\label{thm:main_theorem}
Let $X^A, X^B$ be two sets of $n$ points sampled uniformly at random from the product manifold $\M_A = \M_1 \times \M_3, \M_B = \M_2 \times \M_3$ respectively.
Let
\begin{align}
    \delta^B
    =
    \argmax_{\|v\|=1}\ v^T Q^A P^B_\tau Q^A v
\end{align}
be the differential vector obtained in step 5 of Algorithm~\ref{alg:description}. 
For $n \to \infty$ and assuming (i)-(iii) hold for both manifolds, the following holds  with probability $1-4K^2n^{-10} - (2K+6)n^{-9}$,
\begin{equation}\label{eqn: Theorem 2}
\Big\| \delta^B - \frac{\alpha}{ \sqrt{p n}}\rho_{\pi_2(X)}f^{(2)}_1 \Big \|^2 \leq \OO( K\epsilon_n) + \OO\bigg( K\sqrt{\frac{\log n}{n \epsilon_n^{d/2+1} }}\bigg).
\end{equation}
\end{theorem}
Theorem \ref{thm:main_theorem} implies that given a sufficient number of samples, the leading vector $\delta^B$ of the filtered operator $\tilde P^B$ computed in step 5 of Algorithm~\ref{alg:description} converges to the leading eigenfunction of $\M_2$. Thus, it captures the leading differential variable that is not shared between the two datasets. Note, however, that the bound is on the squared distance, which implies that the rate of convergence is slower than the rate in Theorem \ref{thm:convergence_symmetric_normalized}.

\paragraph{Proof sketch for Theorem \ref{thm:main_theorem}}
Recall that $\eta^{A}_{l,k} = \eta^{(1)}_l + \eta^{(3)}_k$ denotes the eigenvalues of the Laplace-Beltrami operator on the product manifold $\M_A = \M_1 \times \M_3$. The eigenvalues $\eta_{0,k}$ correspond to eigenfunctions $f_{0,k}$ that are functions of the shared variable $\theta$ only (see Eq.~\eqref{eq:eigenfunctions_product}). 
For the analysis, we partition the columns of $V^A$, which contain the eigenvectors of $L^A$ with smallest eigenvalues, to two parts: (i) $V^{A_\theta}$, which includes only the eigenvectors of the form $\{v_{0,k}\}$ which correspond to eigenfunctions $f_{0,k}$, and (ii) $V^{A_\psi}$ which includes vectors $\{v_{l,k}\}$ where $l \neq 0$. In addition, we define the following projection matrices:
\begin{align}
    Q^{A_\theta} = I - V^{A_\theta} (V^{A_\theta})^T,
    \qquad 
    Q^{A_\psi} = I - V^{A_\psi} (V^{A_\psi})^ T.
\end{align}
Note that due to the orthogonality of $V^{A_\psi}$ and $V^{A_\theta}$, we have
\begin{align}
    Q^{A_\theta} Q^{A_\psi} = I - V^A (V^{A})^T.
\end{align}
Similarly, we denote by $V^{B_\theta}$ a matrix that contains, as columns, the eigenvectors of $L^B$ that correspond to $\lambda^B_{0,k'}$, and by $Q^{B_\theta} = I- V^{B_\theta}(V^{B_\theta})^T$. 
We prove Theorem \ref{thm:main_theorem} in three steps:
\begin{itemize}[leftmargin= .6in]
    \item[\textbf{Step 1:}] Let $E_1 = Q^{B_\theta} P^B_\tau Q^{B_\theta}$. The projection matrix $Q^{B_\theta}$ is the \textit{ideal filter} in the sense that it perfectly removes the leading eigenvectors that are functions of the shared variable. We show that the leading eigenvector of $E_1$ converges, for large $n$, to $\rho_X(f_{1,0}^B)$.
    \item[\textbf{Step 2:}] Let $E_2 = Q^{A_\psi}Q^{A_\theta} P^B_\tau Q^{A_\theta}Q^{A_\psi}$. 
    This is the filtered operator $\tilde P^B$, computed in step $4$ of Algorithm \ref{alg:description}.
    We bound the spectral norm of the difference of $E_1$ and $E_2$.
    \item[\textbf{Step 3:}] Using the Davis-Kahan theorem, we bound the spectral norm of $E_1-E_2$ and conclude that the leading eigenvector of $E_2$ converges to $\rho_X(f^B_{1,0})$.
\end{itemize}
A detailed proof of each step appears in Section \ref{sec:three_part_proof} of the supplementary materials.

\section{Computational complexity}\label{sec:complexity}
A naive implementation of Algorithm~\ref{alg:description} and \ref{alg:description 2} requires $\OO(n^3)$ operations for computing all eigenvalues and eigenvectors of the Laplacian matrices $L_A$ and $L_B$ and the differential operators, a complexity which may be prohibitive even for datasets of moderate size. However, one may reduce the complexity to $\OO(n^2)$ by approximating the filtered Laplacian $H(L_A)$ with the leading eigenvectors of $L_A$. 
Let us consider the steps of Algorithm~\ref{alg:description} for computing the differential vector $\delta^A$.
\begin{itemize}
    \item[1-2.] The complexity of computing the weight matrices $W_A,W_B$ and the Laplacian matrices $L_A,L_B$ is $\OO(n^2)$. 
    \item[3] Computing the high-pass filter matrix $H(L_A)$ defined by the step function in Eq. \eqref{eq:threshold_filter} requires $\OO(n^2  C_\tau)$, where $C_\tau$ is the number of eigenvectors with eigenvalue smaller than $\tau$.  
    For a general filter function $h(\lambda)$, one can approximate $H(L_A)$ via 
    \begin{align}\label{eq:HLA_approx}
    H(L_A) &= \sum_i h(\lambda_i) v_i^A (v_i^A)^T = I - \sum_{i=1}^n (1-h(\lambda_i)) v_i^A (v_i^A)^T \notag \\
    &\approx I - \sum_{i=1}^{C_\tau} (1-h(\lambda_i)) v_i^A (v_i^A)^T,  
    \end{align}
    where the threshold $\tau$ is set according to the spectra of $L_A$.
    \item[4-5] By using the approximation of $H(L_A)$ in Eq. \eqref{eq:HLA_approx}, computing $H(L_A) P_B H(L_A)$ requires only $\OO(C_\tau n^2)$ operations, and the complexity of computing its leading eigenvector $\delta_A$ is $\OO(n^2)$.
    
\end{itemize}
Thus, the complexity of all steps in Algorithm~\ref{alg:description} is quadratic in the number of samples. A similar derivation can also be done for Algorithm~\ref{alg:description 2}. 

\section{Simulations}\label{Sec: Examples} We evaluate our method on datasets of different types. For each experiment, we compare the outcome of our approach to the following two baselines: the \citet{fukunaga1970application} Transform (FKT)  applied to the graph Laplacian, and the approach developed by \cite{shnitzer2019recovering} (referred to as Shnitzer +). Details regarding these methods are provided in \ref{app: methods}.
Additional experiments on simulated and real ECG measurements are provided in~\ref{app:additional_exp}.

\subsection{Rectangle vs. Line}
\paragraph*{Data} We present results from the line vs.\ rectangle experiment described in Section~\ref{Sec: Algorithm - one variable}, where 
$X_B$ is sampled uniformly from the rectangle $[0,a]\times[0,b]$, while $X_A$ consists only of the first coordinate in $[0,a]$.

\paragraph*{Goal}
To assess the recovery of a latent one-dimensional variable in a simple product-manifold setting and to illustrate the convergence behavior of the proposed methods.

\paragraph*{Evaluation Metric} 
Let $\psi^B(x) = \cos(\pi x/b)$ be the first non-trivial eigenfunction of the one-dimensional Laplacian with Neumann boundary conditions on $[0, b]$, and let $\psi^B_i$ be the discretization  of $\psi^B$ at points equal to the second coordinate of $X^B$ such that
\[
\psi^B_i = \cos(\pi x^B_{i,1} / b) \qquad i = 1,\ldots,n
\]
Recovery of $\psi^B$ is assessed by the absolute value of the Pearson correlation between $\delta^B_0$ and $\psi^B_{1:n} = (\psi^B_1, \ldots, \psi^B_n)$.


\paragraph*{Results} Table~\ref{tab:rectangle_results} presents the mean correlation and standard deviation for each method, based on $500$ repetitions. 
As indicated in the table, DELVE consistently 
recovered the hidden parameter. 
The FKT method also delivered strong results, with a slightly lower mean correlation than DELVE. In contrast, neither the real nor the imaginary components of the leading eigenvector of operator $A$ in \citet{shnitzer2019recovering} effectively recovered $\psi^B$ under this setting.

Next, we compare the empirical convergence rates of the differential vectors with the theoretical results from Section~\ref{Sec: Analysis}. Let
\[
\theta_i = \cos\!\left(\pi x_i^A/a\right), \qquad i=1,\ldots,n.
\]
Our analysis builds on Theorem~5.5 in~\cite{cheng2022eigen}, which establishes eigenvector convergence for the random walk Laplacian computed from a single dataset. 
Figure~\ref{fig:convergence_line_rectangle} shows $\|v^{(rw)} - \theta_{1:n}\|_2$ (blue) and $\|\delta^B - \psi^B_{1:n}\|_2$ as functions of the sample size (log--log scale). Following the procedure in Section~7.1 of~\cite {cheng2022eigen}, for each $n$ we tested seven candidate bandwidth values for the kernel in Eq.~\eqref{eqn: gaussian kernel weights} and selected the best one. 

The empirical convergence rate of $v^{(rw)}$ is approximately $-0.42$, which is faster than the theoretical bound in~\cite{cheng2022eigen}, though slightly slower than their reported empirical rate of $-0.49$. This discrepancy likely stems from the specific bandwidth choice. Consistent with Theorem~\ref{thm:main_theorem}, the convergence rate of $\delta^B$ is approximately half that of $v^{(rw)}$, namely $-0.216$.

\begin{table}[tb]
\centering
\resizebox{\textwidth}{!}{
\begin{tabular}{lllll}
\toprule
 & DELVE & Shnitzer + (real) & Shnitzer + (imag) & FKT \\
\midrule
Mean (SD) & \textbf{0.973 (0.001)} & 0.118 (0.089) & 0.077 (0.056) & 0.901 (0.005) \\
\bottomrule
\end{tabular}

}
\caption{\textbf{Rectangle vs. Line}. Absolute value of the Pearson correlation for 500 independent runs,  between differential vectors obtained by three methods and $\psi^B(x) = \cos(\pi x/b)$. 
}
\label{tab:rectangle_results}
\end{table}

\begin{figure}[tb]
\centering
   \includegraphics[width=0.7\linewidth]{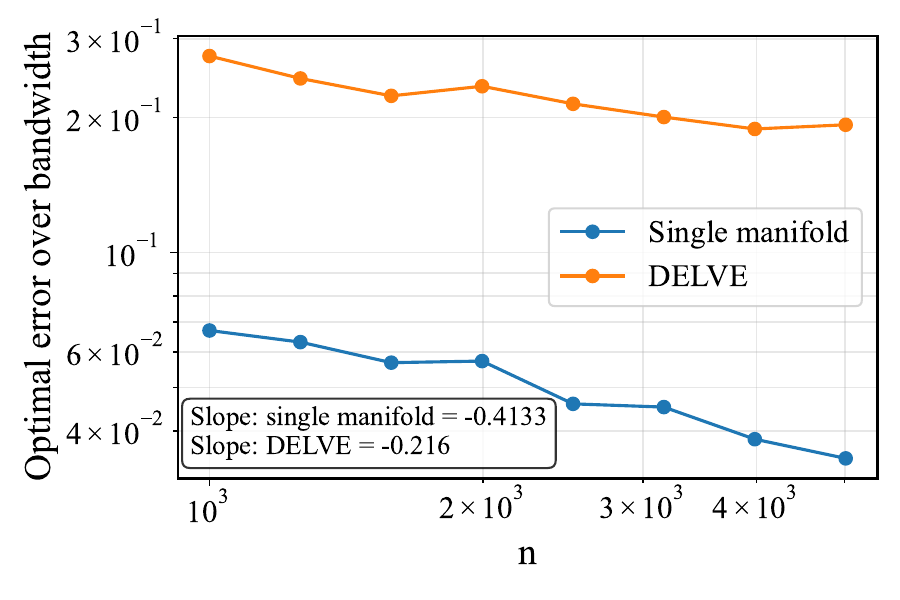}
   \caption{\textbf{Rectangle vs. Line}. Blue curve: difference in $l_2$ norm between leading Laplacian eigenvector $v^{(rw)}$, computed from $x_i^A$ and the theoretical vector whose elements equal $\cos(\pi x_i^A/a)$. Orange curve: difference between 
   differential vector $\delta^B$ and $\cos(\pi x_{i,1}^B/b)$. For both curves, the difference is presented as a function of the number of samples $n$ on a log-log scale.} 
   \label{fig:convergence_line_rectangle}
\end{figure}

\subsection{Synthetic Torus Data}
\paragraph*{Data} We generated $n=2,000$ triplets $\theta, \psi^A, \psi^B \sim \text{Uniform}[0,2\pi ]$. From these, we constructed two sets of points in $\mathbb{R}^3$, each forming a 2-dimensional torus:
\[
\mathbf{X}^{A} = 
\begin{pmatrix} 
(R + r^{(1)} \cos( \psi^A)) \cos( \theta) \\
(R + r^{(1)} \cos( \psi^A)) \sin( \theta) \\
r^{(1)} \sin (\psi^A)
\end{pmatrix}, 
\quad
\mathbf{X}^{B} = 
\begin{pmatrix} 
(R + r^{(2)} \cos(\psi^B)) \cos( \theta) \\
(R + r^{(2)} \cos(\psi^B)) \sin(\theta) \\
r^{(2)} \sin(\psi^B)
\end{pmatrix}.
\]
Both tori share the same "primary" angle $\theta$, but differ in their "secondary" angles $\psi^A$ and $\psi^B$. This makes the shared structure dominant while the differences are subtle. The parameters in our experiment were set to $R = 10$, $r^{(1)} = 4$, and $r^{(2)} = 2$.

\begin{figure}\hspace{-14pt}
   \includegraphics[width=13.5cm]{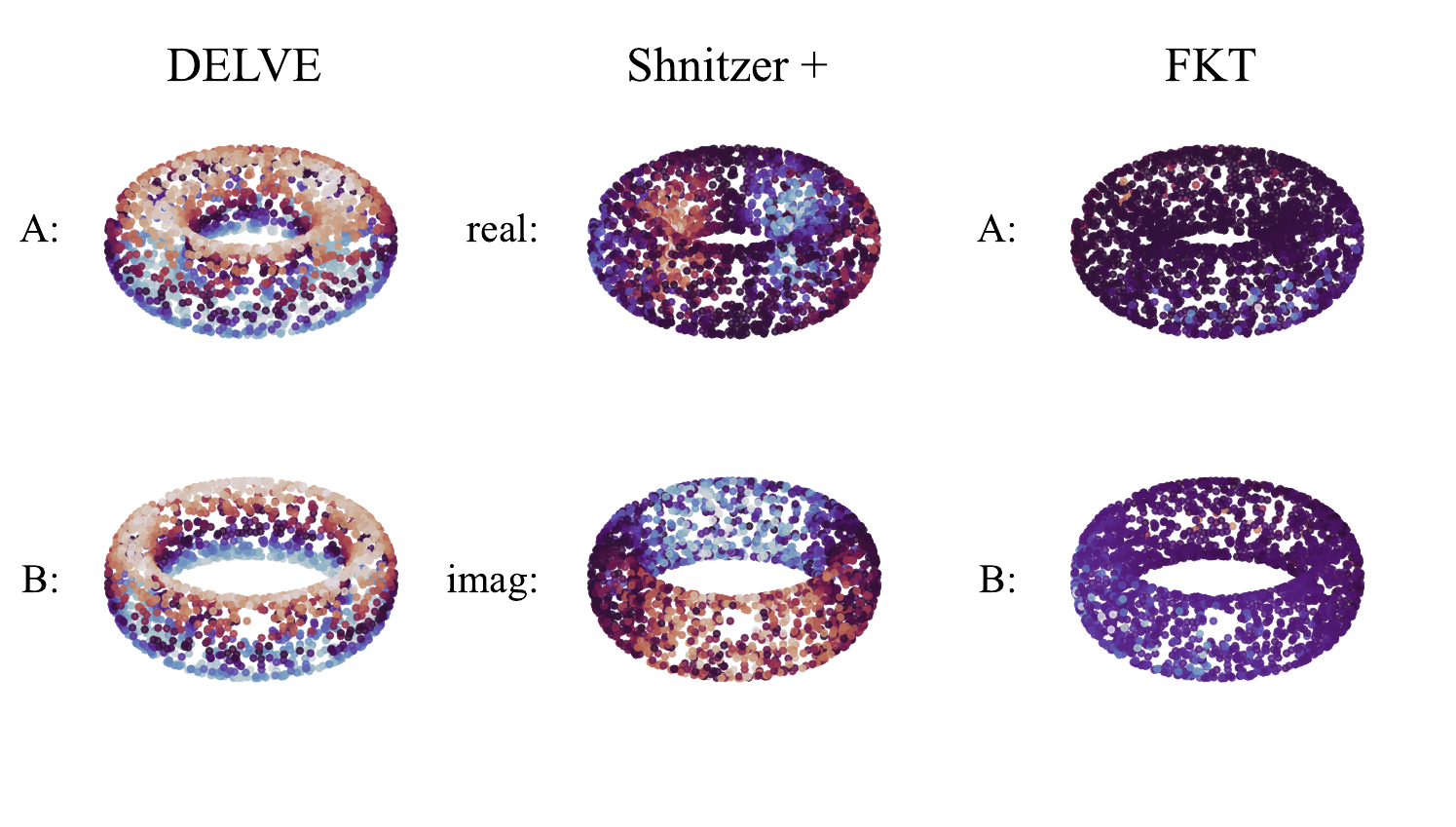}
   \caption{\textbf{Multimodal Tori}. \textbf{Top:}
   $X^A$ colored by the leading vector associated with $\psi^A$ in each method. \textbf{Bottom:}
   $X^B$ colored by the leading vector associated with $\psi^B$.}
   \label{fig:torus_vis}
\end{figure}

\paragraph*{Evaluation Metric} Due to rotational symmetry, in this setting, the recovered vector has an arbitrary phase-shift with respect to the latent parameters $\psi^A$ and $\psi^B$. To account for this, we evaluate recovery using a cyclic Pearson correlation that measures alignment with $\cos(\psi)$, making the metric invariant to phase shifts.

\paragraph*{Results} As shown in Table~\ref{tab:torus_results}, DELVE’s $\delta_0^A$ and $\delta_0^B$ are highly correlated ($r>0.99$) with the true latent parameters $\cos(\psi^A)$ and $\cos(\psi^B)$, whereas Shnitzer et al. (2019) and FKT yield only weak correlations.
Figure~\ref{fig:torus_vis} shows that, when observations are colored by the recovered vectors, DELVE highlights the inner-circle variation corresponding to the latent component, while coloring by the Shnitzer et al. (2019) vectors emphasizes the shared angle $\theta$. Conversely, FKT displays no coherent association and exhibits pronounced outliers despite the absence of outliers in the data.

\begin{table}[tb]
\centering
\resizebox{\textwidth}{!}{
\begin{tabular}{lllll}
\toprule
 & DELVE & Shnitzer + (real) & Shnitzer + (imag) & FKT \\
\midrule
$\psi^A$: Mean (SD) & \textbf{0.991 (0.003)} & 0.014 (0.028) & 0.026 (0.013) & 0.023 (0.013) \\
$\psi^B$: Mean (SD) & \textbf{0.996 (0.003)} & 0.015 (0.031) & 0.018 (0.010) & 0.023 (0.013) \\
\bottomrule
\end{tabular}

}
\caption{\textbf{Multimodal Tori}. Circular Pearson correlation (CPC), averaged over 500 runs,  between differential vectors obtained by three methods and $cos(\psi^A), cos(\psi^B)$ (higher is better).} 
\label{tab:torus_results}
\end{table}

\subsection{Rotating Dolls}\label{sec: yoda example}
\paragraph*{Data} The datasets $X^A, X^B$ consist of 4050 images of rotating dolls, created by \cite{lederman2018learning}\footnote{ Available online at \url{https://roy.lederman.name/alternating-diffusion/}}. In this example, three dolls were placed on rotating displays with varying rotation rates. Simultaneous snapshots of the rotating dolls were captured by two cameras: the first camera recorded the rabbit and bulldog dolls, while the second camera captured the bulldog and Yoda dolls. Under this setting, the rotation angle of the bulldog is the shared latent variable, $\theta$, and the rotation angles of the Yoda and the rabbit are the unique latent variables, $\psi^A, \psi^B$, respectively.

\paragraph*{Evaluation metric}

To obtain ground-truth rotation angles, each image was split into two halves, each containing a single figure. For each half, a Laplacian graph was constructed, and the two non-trivial leading eigenvectors of that graph were used as proxies for the true rotation angles. For example, for the cropped bulldog images, the two eigenvectors $v^{\text{bulldog}}_1, v^{\text{bulldog}}_2$ were used to compute the estimated angle as $\hat{\theta} = \arctan(v^{\text{bulldog}}_1,v^{\text{bulldog}}_2)$. Performance was quantified by computing the absolute value of the Pearson correlation between the vectors produced by each method and the corresponding sine of the ground-truth angles.

\paragraph*{Results}
Table~\ref{tab:Yoda_results} shows that DELVE achieves high correlation with both $\psi^A$ and $\psi^B$. The FKT method performs comparably in this setting. In contrast, the real and imaginary components of the leading vector of operator $A$ proposed by \citet{shnitzer2019recovering} show substantially lower correlation with both ground-truth angles.

\begin{table}[tb]

\centering
\begin{tabular}{lrrrr}
\toprule
 & DELVE & Shnitzer + (real) & Shnitzer + (imag) & FKT \\
\midrule
$\psi^B$ (rabbit) & \textbf{0.928} & 0.022 & 0.000 &\textbf{0.922} \\
$\psi^A$ (Yoda) & \textbf{0.995} & 0.011 & 0.057 & \textbf{0.995}\\
\bottomrule
\end{tabular}

\caption{\textbf{Rotating Dolls}. Absolute value of the Pearson correlation between the differential vectors obtained by three methods and $sin(\psi^A), sin(\psi^B)$ (higher is better). }
\label{tab:Yoda_results}
\end{table}


\subsection{Accelerometer Sensors}

\paragraph*{Data}
The \textit{Human Activity Recognition (HAR) dataset} \citep{human_activity_recognition_using_smartphones_240, anguita2013public} contains recordings of human activities collected from smartphone sensors, with $n=10,299$ observations across six activity labels: (i) walk, (ii) walk upstairs, (iii) walk downstairs, (iv) stand, (v) sit, and (vi) lay.
We focus on accelerometer-based measurements and construct two sensor-specific feature sets. 
The first dataset, $X^A$, consists of time-domain features extracted from the body accelerometer, capturing dynamic motion components with gravity removed.
The second dataset, $X^B$, consists of time-domain features extracted from the gravity accelerometer, capturing orientation and posture-related information. 

\paragraph{Goal} 
Comparing posture-dominated and motion-dominated sensing modalities, we demonstrate that sensor-specific latent variables encode meaningful task-related structure. This experiment indicates that important information is contained not only in shared representations but also in modality-specific components.


\paragraph{Evaluation metric}
In addition to the differential vectors computed for each method, we also computed the shared component of the two modalities. For DELVE, this was done according to Eq.~\eqref{eqn:projection of LA onto LB}. The shared component of FKT was computed similarly. The computation of a shared component is also derived in \cite{shnitzer2019recovering}. 
For each method, we applied K-means clustering to four embeddings: (i-ii) two differential vectors of modality $A$ and $B$, (iii) two vectors that constitute the shared component, and (iv) a concatenation of all three. 
Clustering performance was quantified with the Adjusted Rand Index (ARI) and Normalized Mutual Information (NMI). 


\paragraph{Results}
Figure~\ref{fig: Acc-gyr uniques} shows the embeddings obtained from each modality alone (top) and after augmenting them with the differential component learned from the other sensor (bottom). The single-sensor embeddings provide only coarse separation: for example, gravity alone primarily distinguishes lay from non-lay poses. Incorporating the body-accelerometer–specific component further separates walking from the static activities and reveals structure that would otherwise remain collapsed.

\begin{figure}[htb!]
   \includegraphics[width=0.99\linewidth]{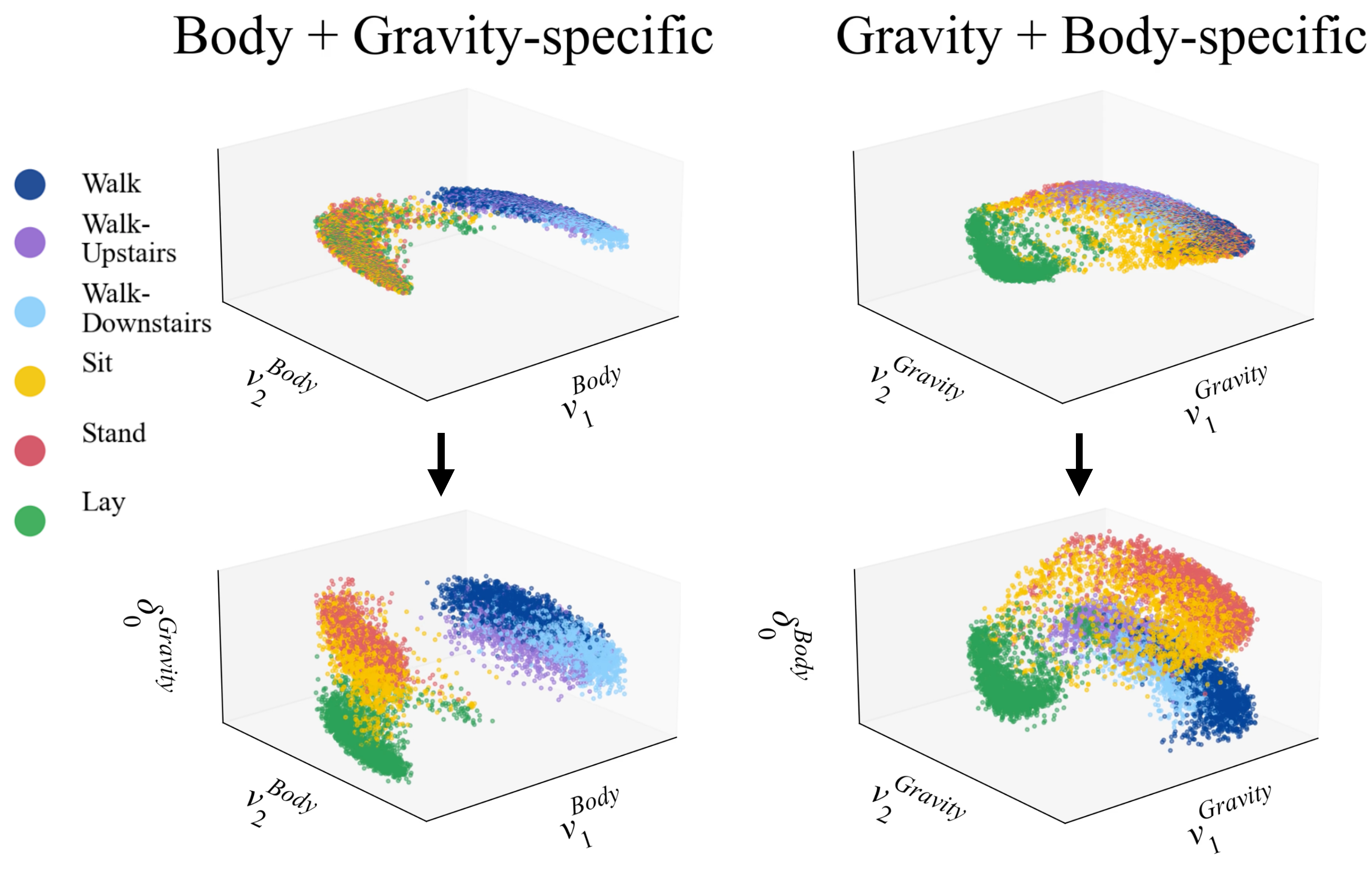}
    \caption{\textbf{Accelerometer Sensors.} Spectral embeddings of the body- and gravity-accelerometer modalities. \textbf{Top:} Two-dimensional embeddings obtained from the two leading non-trivial eigenvectors of the random-walk Laplacian, computed for each modality separately. \textbf{Bottom:} Augments these embeddings with a third coordinate, the learned differential vector from the complementary modality, thereby exposing the additional modality-specific information contributed by each sensor. Colors denote activity labels. }

   \label{fig: Acc-gyr uniques}
\end{figure}

Figure~\ref{fig: Acc-gyr uniqe-vs-shared} illustrates the effect of combining shared and sensor-specific DELVE vectors. The UMAP embedding of the two leading shared vectors merges all walking observations into a single cluster, whereas adding the two leading differential vectors separates regular walking from other walking activities. This demonstrates that modality-specific variation provides discriminative information beyond the shared representation.

\begin{figure}[htb!]
   \includegraphics[width=0.99\linewidth]{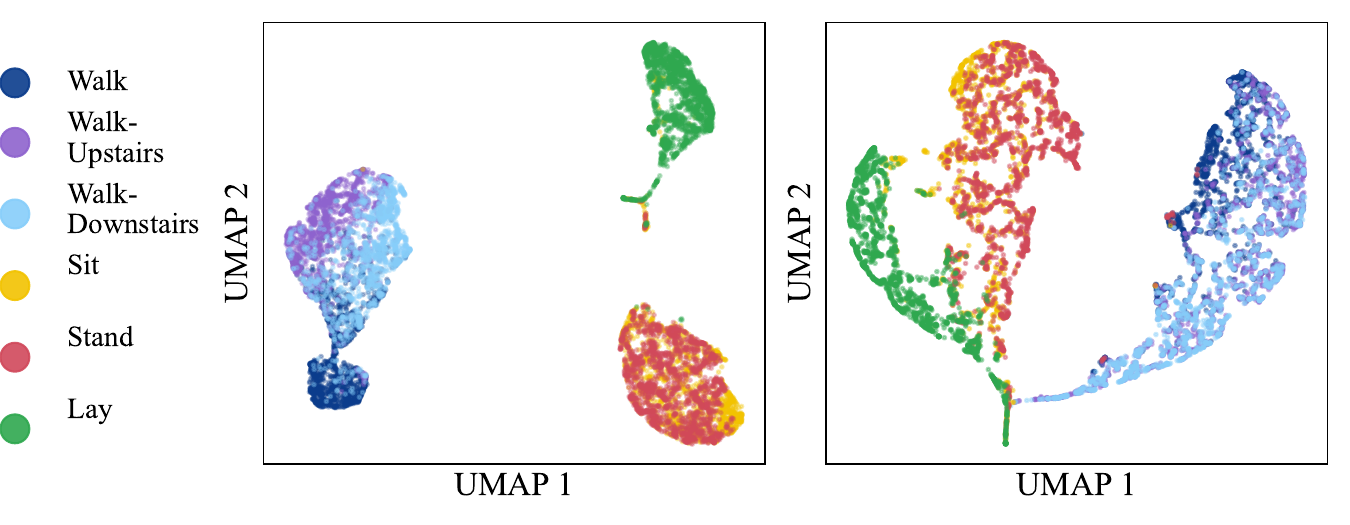}
    \caption{\textbf{Accelerometer Sensors.} Points are colored by the six activity labels, highlighting the separation of different activities in the low-dimensional space. \textbf{Right:} UMAP embedding using the two leading shared vectors. \textbf{Left:} UMAP embedding using the combined vectors: the two leading components from each sensor-specific differential vector and the two leading components of the shared vector.}

   \label{fig: Acc-gyr uniqe-vs-shared}
\end{figure}

Table~\ref{tab:ACC-body-gravity} summarizes the clustering performance across methods. Clustering based solely on DELVE’s differential vectors consistently achieves higher ARI and NMI scores than both Shnitzer and FKT, while concatenating shared and differential vectors gives performance comparable to Shnitzer. These results show that the differential components capture complementary information that is neither redundant nor negligible relative to the shared vectors, substantially improving clustering performance compared with using shared vectors alone.


\begin{table}[tb]
\centering
\begin{tabular}{lrrrrrr}
\toprule
 & \multicolumn{2}{r}{DELVE} & \multicolumn{2}{r}{Shnitzer +} & \multicolumn{2}{r}{FKT} \\
 \toprule
 & ARI & NMI & ARI & NMI & ARI & NMI \\
Vector &  &  &  &  &  &  \\
\midrule
$\psi^A$ /\ real & 0.291 & 0.413 & 0.297 & 0.382 & 0.275 & 0.432 \\
$\psi^B$ /\ imag & \textbf{0.413} & 0.475 & 0.226 & 0.387 & 0.000 & 0.004 \\
shared & 0.384 & \textbf{0.526} & \textbf{0.400} & \textbf{0.538} & 0.275 & 0.432 \\
all & \textbf{0.551} & \textbf{0.626} & \textbf{0.514} & \textbf{0.631} & 0.304 & 0.482 \\
\bottomrule
\end{tabular}

\caption{\textbf{Accelerometer sensors results}. The ARI and NMI of the K-means clustering, done on: (i)-(ii) two leading differential vectors of each method, (iii) two leading vectors of the shared component, and (iv) concatenation of all six vectors. Higher results are better.}\label{tab:ACC-body-gravity}
\end{table}

\section{Discussion}

In this work, we introduced DELVE, a spectral algorithm for extracting modality-specific latent structure from multimodal datasets. Our method addresses an important gap in the existing literature: while most multimodal algorithms focus on shared latent variables or building joint embeddings, fewer approaches explicitly target latent factors that are visible in one modality but absent in another. We demonstrated that differences in connectivity patterns across modalities contain structured information about these differential components, and that carefully designed graph filters can isolate this information.
Theoretically, we established convergence guarantees under a product manifold model, showing that DELVE recovers eigenfunctions associated with modality-specific coordinates.  
Extensive experiments on synthetic and real data illustrate that DELVE isolates directional modality-specific structure in contrast to previous spectral-based methods, including \cite{shnitzer2019recovering} and FKT.

Despite these strengths, several limitations remain. While our experiments demonstrate that DELVE is robust across a range of settings, systematic strategies for selecting parameters, particularly graph bandwidth and the spectral threshold used in the filter, could further enhance performance and interpretability. A deeper theoretical analysis of the iterative scheme would also be valuable.

Future work will extend DELVE in several directions. One promising direction is to couple the spectral approach with non-Euclidean or learned kernel metrics, which may tailor the connectivity structure to the underlying modality. Another potential avenue is integrating our method with supervision, allowing semisupervised or weak labels to guide the extraction of differential variables. Finally, applying DELVE to additional modalities, such as multimodal neuroimaging, remote sensing, or genomics, may offer further insights into modality-specific structure and reveal new use cases where existing shared-structure methods fall short.

Overall, our results show that differential latent variables are not merely noise to be discarded but contain meaningful, modality-specific information. DELVE provides a principled framework, supported by theory and empirical evidence, for isolating and analyzing this distinctive structure. We hope DELVE provides a useful tool for differential analysis in multimodal data and inspires further research into dataset-specific structure in manifold learning.

\subsection*{Reproducibility}
The code to reproduce all experiments in this paper is publicly available at \url{https://github.com/ShiraAlon/DELVE}.

\subsection*{Acknowledgments}

We thank Roy Lederman for assistance with datasets and Yoel Shkolnisky for his detailed and helpful feedback on a draft of this manuscript.
AM is supported in part by ISF Grant No. 1662/22 and NSF-BSF Grant No. 2022778.

\bibliographystyle{elsarticle-harv} 
\bibliography{bibliography}

@article{zelnik2004self,
  title={Self-tuning spectral clustering},
  author={Zelnik-Manor, Lihi and Perona, Pietro},
  journal={Advances in neural information processing systems},
  volume={17},
  year={2004}
}

@inproceedings{lee2021private,
  title={Private-shared disentangled multimodal vae for learning of latent representations},
  author={Lee, Mihee and Pavlovic, Vladimir},
  booktitle={Proceedings of the ieee/cvf conference on computer vision and pattern recognition},
  pages={1692--1700},
  year={2021}
}

@article{cheng2022eigen,
  title={Eigen-convergence of Gaussian kernelized graph Laplacian by manifold heat interpolation},
  author={Cheng, Xiuyuan and Wu, Nan},
  journal={Applied and Computational Harmonic Analysis},
  volume={61},
  pages={132--190},
  year={2022},
  publisher={Elsevier}
}

@article{moses2022museum,
  title={Museum of spatial transcriptomics},
  author={Moses, Lambda and Pachter, Lior},
  journal={Nature Methods},
  volume={19},
  number={5},
  pages={534--546},
  year={2022},
  publisher={Nature Publishing Group US New York}
}

@article{tahmasian2016based,
  title={Based on the network degeneration hypothesis: separating individual patients with different neurodegenerative syndromes in a preliminary hybrid PET/MR study},
  author={Tahmasian, Masoud and Shao, Junming and Meng, Chun and Grimmer, Timo and Diehl-Schmid, Janine and Yousefi, Behrooz H and F{\"o}rster, Stefan and Riedl, Valentin and Drzezga, Alexander and Sorg, Christian},
  journal={Journal of Nuclear Medicine},
  volume={57},
  number={3},
  pages={410--415},
  year={2016},
  publisher={Soc Nuclear Med}
}

@inproceedings{zhang2021product,
  title={Product manifold learning},
  author={Zhang, Sharon and Moscovich, Amit and Singer, Amit},
  booktitle={International Conference on Artificial Intelligence and Statistics},
  pages={3241--3249},
  year={2021},
  organization={PMLR}
}

@article{shuman2013emerging,
  title={The emerging field of signal processing on graphs: Extending high-dimensional data analysis to networks and other irregular domains},
  author={Shuman, David I and Narang, Sunil K and Frossard, Pascal and Ortega, Antonio and Vandergheynst, Pierre},
  journal={IEEE signal processing magazine},
  volume={30},
  number={3},
  pages={83--98},
  year={2013},
  publisher={IEEE}
}

@article{von2007tutorial,
  title={A tutorial on spectral clustering},
  author={Von Luxburg, Ulrike},
  journal={Statistics and computing},
  volume={17},
  pages={395--416},
  year={2007},
  publisher={Springer}
}

@book{horn2012matrix,
  title={Matrix analysis},
  author={Horn, Roger A and Johnson, Charles R},
  year={2012},
  publisher={Cambridge university press}
}

@article{belkin2003laplacian,
  title={Laplacian eigenmaps for dimensionality reduction and data representation},
  author={Belkin, Mikhail and Niyogi, Partha},
  journal={Neural computation},
  volume={15},
  number={6},
  pages={1373--1396},
  year={2003},
  publisher={MIT Press}
}

@article{xu2013survey,
  title={A survey on multi-view learning},
  author={Xu, Chang and Tao, Dacheng and Xu, Chao},
  journal={arXiv preprint arXiv:1304.5634},
  year={2013}
}

@article{hotelling1992relations,
  title={Relations between two sets of variates},
  author={Hotelling, Harold},
  journal={Breakthroughs in statistics: methodology and distribution},
  pages={162--190},
  year={1992},
  publisher={Springer}
}

@inproceedings{chaudhuri2009multi,
  title={Multi-view clustering via canonical correlation analysis},
  author={Chaudhuri, Kamalika and Kakade, Sham M and Livescu, Karen and Sridharan, Karthik},
  booktitle={Proceedings of the 26th annual international conference on machine learning},
  pages={129--136},
  year={2009}
}

@article{yu2015useful,
  title={A useful variant of the Davis--Kahan theorem for statisticians},
  author={Yu, Yi and Wang, Tengyao and Samworth, Richard J},
  journal={Biometrika},
  volume={102},
  number={2},
  pages={315--323},
  year={2015},
  publisher={Oxford University Press}
}

@article{eynard2015multimodal,
  title={Multimodal manifold analysis by simultaneous diagonalization of laplacians},
  author={Eynard, Davide and Kovnatsky, Artiom and Bronstein, Michael M and Glashoff, Klaus and Bronstein, Alexander M},
  journal={IEEE transactions on pattern analysis and machine intelligence},
  volume={37},
  number={12},
  pages={2505--2517},
  year={2015},
  publisher={IEEE}
}

@article{zhao2018multi,
  title={Multi-view manifold learning with locality alignment},
  author={Zhao, Yue and You, Xinge and Yu, Shujian and Xu, Chang and Yuan, Wei and Jing, Xiao-Yuan and Zhang, Taiping and Tao, Dacheng},
  journal={Pattern Recognition},
  volume={78},
  pages={154--166},
  year={2018},
  publisher={Elsevier}
}

@article{lederman2018learning,
  title={Learning the geometry of common latent variables using alternating-diffusion},
  author={Lederman, Roy R and Talmon, Ronen},
  journal={Applied and Computational Harmonic Analysis},
  volume={44},
  number={3},
  pages={509--536},
  year={2018},
  publisher={Elsevier}
}

@article{shnitzer2019recovering,
  title={Recovering hidden components in multimodal data with composite diffusion operators},
  author={Shnitzer, Tal and Ben-Chen, Mirela and Guibas, Leonidas and Talmon, Ronen and Wu, Hau-Tieng},
  journal={SIAM Journal on Mathematics of Data Science},
  volume={1},
  number={3},
  pages={588--616},
  year={2019},
  publisher={SIAM}
}

@article{lindenbaum2020multi,
  title={Multi-view diffusion maps},
  author={Lindenbaum, Ofir and Yeredor, Arie and Salhov, Moshe and Averbuch, Amir},
  journal={Information Fusion},
  volume={55},
  pages={127--149},
  year={2020},
  publisher={Elsevier}
}

@article{katz2020spectral,
  title={Spectral flow on the manifold of spd matrices for multimodal data processing},
  author={Katz, Ori and Lederman, Roy R and Talmon, Ronen},
  journal={arXiv preprint arXiv:2009.08062},
  year={2020}
}

@inproceedings{akaho2001kernel,
  title={A kernel method for canonical correlation analysis},
  author={AKAHO, S},
  booktitle={International Meeting of Psychometric Society, 2001},
  volume={1},
  year={2001}
}

@article{lai2000kernel,
  title={Kernel and nonlinear canonical correlation analysis},
  author={Lai, Pei Ling and Fyfe, Colin},
  journal={International journal of neural systems},
  volume={10},
  number={05},
  pages={365--377},
  year={2000},
  publisher={World Scientific}
}

@article{yair2016local,
  title={Local canonical correlation analysis for nonlinear common variables discovery},
  author={Yair, Or and Talmon, Ronen},
  journal={IEEE Transactions on Signal Processing},
  volume={65},
  number={5},
  pages={1101--1115},
  year={2016},
  publisher={IEEE}
}

@inproceedings{diethe2008multiview,
  title={Multiview fisher discriminant analysis},
  author={Diethe, Tom and Hardoon, David R and Shawe-Taylor, John},
  booktitle={NIPS workshop on learning from multiple sources},
  volume={1},
  year={2008}
}

@inproceedings{song2010feature,
  title={Feature selection based on linear discriminant analysis},
  author={Song, Fengxi and Mei, Dayong and Li, Hongfeng},
  booktitle={2010 international conference on intelligent system design and engineering application},
  volume={1},
  pages={746--749},
  year={2010},
  organization={IEEE}
}

@article{sharma2014feature,
  title={A feature selection method using improved regularized linear discriminant analysis},
  author={Sharma, Alok and Paliwal, Kuldip K and Imoto, Seiya and Miyano, Satoru},
  journal={Machine vision and applications},
  volume={25},
  number={3},
  pages={775--786},
  year={2014},
  publisher={Springer}
}

@article{fukunaga1970application,
  title={Application of the {Karhunen-Loeve} expansion to feature selection and ordering},
  author={Fukunaga, Keinosuke and Koontz, Warren LG},
  journal={IEEE Transactions on computers},
  volume={100},
  number={4},
  pages={311--318},
  year={1970},
  publisher={IEEE}
}

@InProceedings{pmlr-v202-cohen23b,
  title = 	 {Few-Sample Feature Selection via Feature Manifold Learning},
  author =       {Cohen, David and Shnitzer, Tal and Kluger, Yuval and Talmon, Ronen},
  booktitle = 	 {Proceedings of the 40th International Conference on Machine Learning},
  pages = 	 {6296--6319},
  year = 	 {2023},
  volume = 	 {202},
  month = 	 {23--29 Jul},
  publisher =    {PMLR}
}

@article{sristi2022disc,
  title={DiSC: Differential Spectral Clustering of Features},
  author={Sristi, Ram Dyuthi and Mishne, Gal and Jaffe, Ariel},
  journal={Advances in Neural Information Processing Systems},
  volume={35},
  pages={26269--26282},
  year={2022}
}

@inproceedings{li2007kernel,
  title={Kernel Fukunaga-Koontz transform subspaces for enhanced face recognition},
  author={Li, Yung-hui and Savvides, Marios},
  booktitle={2007 IEEE conference on computer vision and pattern recognition},
  pages={1--8},
  year={2007},
  organization={IEEE}
}

@article{specht2021single,
  title={Single-cell proteomic and transcriptomic analysis of macrophage heterogeneity using SCoPE2},
  author={Specht, Harrison and Emmott, Edward and Petelski, Aleksandra A and Huffman, R Gray and Perlman, David H and Serra, Marco and Kharchenko, Peter and Koller, Antonius and Slavov, Nikolai},
  journal={Genome biology},
  volume={22},
  number={1},
  pages={1--27},
  year={2021},
  publisher={BioMed Central}
}

@article{fisher1936use,
  title={The use of multiple measurements in taxonomic problems},
  author={Fisher, Ronald A},
  journal={Annals of eugenics},
  volume={7},
  number={2},
  pages={179--188},
  year={1936},
  publisher={Wiley Online Library}
}

@article{coifman2006diffusion,
  title={Diffusion maps},
  author={Coifman, Ronald R and Lafon, St{\'e}phane},
  journal={Applied and computational harmonic analysis},
  volume={21},
  number={1},
  pages={5--30},
  year={2006},
  publisher={Elsevier}
}

@article{KileelMoscovichZeleskoSinger2021,
    title={Manifold Learning with Arbitrary Norms},
    volume={27},
    ISSN={1531-5851},
    number={5},
    journal={Journal of Fourier Analysis and Applications},
    publisher={Springer Science and Business Media LLC},
    author={Kileel, Joe and Moscovich, Amit and Zelesko, Nathan and Singer, Amit},
    year={2021},
    month=sep
}

@article{ChengWu2021,
    title={Convergence of graph Laplacian with kNN self-tuned kernels},
    volume={11},
    ISSN={2049-8772},
    number={3},
    journal={Information and Inference: A Journal of the IMA},
    publisher={Oxford University Press (OUP)},
    author={Cheng, Xiuyuan and Wu, Hau-Tieng},
    year={2021},
    month=sep,
    pages={889–957}
}

@article{von2008consistency,
  title={Consistency of spectral clustering},
  author={Von Luxburg, Ulrike and Belkin, Mikhail and Bousquet, Olivier},
  journal={The Annals of Statistics},
  pages={555--586},
  year={2008},
  publisher={JSTOR}
}

@article{singer2017spectral,
  title={Spectral convergence of the connection Laplacian from random samples},
  author={Singer, Amit and Wu, Hau-Tieng},
  journal={Information and Inference: A Journal of the IMA},
  volume={6},
  number={1},
  pages={58--123},
  year={2017},
  publisher={Oxford University Press}
}

@article{garcia2020error,
  title={Error estimates for spectral convergence of the graph Laplacian on random geometric graphs toward the {Laplace-Beltrami} operator},
  author={Garc{\'\i}a Trillos, Nicol{\'a}s and Gerlach, Moritz and Hein, Matthias and Slep{\v{c}}ev, Dejan},
  journal={Foundations of Computational Mathematics},
  volume={20},
  number={4},
  pages={827--887},
  year={2020},
  publisher={Springer}
}

@article{wormell2021spectral,
  title={Spectral convergence of diffusion maps: Improved error bounds and an alternative normalization},
  author={Wormell, Caroline L and Reich, Sebastian},
  journal={SIAM Journal on Numerical Analysis},
  volume={59},
  number={3},
  pages={1687--1734},
  year={2021},
  publisher={SIAM}
}

@article{dunson2021spectral,
  title={Spectral convergence of graph {Laplacian} and heat kernel reconstruction in $L^\infty$ from random samples},
  author={Dunson, David B and Wu, Hau-Tieng and Wu, Nan},
  journal={Applied and Computational Harmonic Analysis},
  volume={55},
  pages={282--336},
  year={2021},
  publisher={Elsevier}
}

@article{calder2022improved,
  title={Improved spectral convergence rates for graph Laplacians on $\varepsilon$-graphs and k-NN graphs},
  author={Calder, Jeff and Trillos, Nicolas Garcia},
  journal={Applied and Computational Harmonic Analysis},
  volume={60},
  pages={123--175},
  year={2022},
  publisher={Elsevier}
}

@article{ToaderSigworthLederman2023,
    title = {Methods for Cryo-EM Single Particle Reconstruction of Macromolecules Having Continuous Heterogeneity},
    journal = {Journal of Molecular Biology},
    volume = {435},
    number = {9},
    pages = {168020},
    year = {2023},
    note = {New Frontier of Cryo-Electron Microscopy Technology},
    issn = {0022-2836},
    author = {Bogdan Toader and Fred J. Sigworth and Roy R. Lederman},
    keywords = {cryo-EM, single particle reconstruction, continuous heterogeneity, conformation manifold},
}

@ARTICLE{BendoryBartesaghiSinger2020,
    author={Bendory, Tamir and Bartesaghi, Alberto and Singer, Amit},
    journal={IEEE Signal Processing Magazine}, 
    title={Single-Particle Cryo-Electron Microscopy: Mathematical Theory, Computational Challenges, and Opportunities}, 
    year={2020},
    volume={37},
    number={2},
    pages={58-76},
    keywords={Three-dimensional displays;Two dimensional displays;Microscopy;Photomicrography;Estimation;Microscopy;Proteins},
}

@article{MoscovichHaleviAndenSinger2020,
    title={Cryo-{EM} reconstruction of continuous heterogeneity by Laplacian spectral volumes},
    volume={36},
    ISSN={1361-6420},
    number={2},
    journal={Inverse Problems},
    publisher={IOP Publishing},
    author={Moscovich, Amit and Halevi, Amit and Andén, Joakim and Singer, Amit},
    year={2020},
    month=jan,
    pages={024003}
}

@article{MajiEtal2020,
    title={Propagation of Conformational Coordinates Across Angular Space in Mapping the Continuum of States from Cryo-{EM} Data by Manifold Embedding},
    volume={60},
    ISSN={1549-960X},
    number={5},
    journal={Journal of Chemical Information and Modeling},
    publisher={American Chemical Society (ACS)},
    author={Maji, Suvrajit and Liao, Hstau and Dashti, Ali and Mashayekhi, Ghoncheh and Ourmazd, Abbas and Frank, Joachim},
    year={2020},
    month=mar,
    pages={2484–2491}
}

@article{ZhongEtal2021,
    title={{CryoDRGN}: reconstruction of heterogeneous cryo-{EM} structures using neural networks},
    volume={18},
    ISSN={1548-7105},
    number={2},
    journal={Nature Methods},
    publisher={Springer Science and Business Media LLC},
    author={Zhong, Ellen D. and Bepler, Tristan and Berger, Bonnie and Davis, Joseph H.}, year={2021},
    month=feb,
    pages={176–185}
}

@InProceedings{pmlr-v221-he23a,
  title = 	 {Product Manifold Learning with Independent Coordinate Selection},
  author =       {He, Jesse and Brug\`ere, Tristan and Mishne, Gal},
  booktitle = 	 {Proceedings of 2nd Annual Workshop on Topology, Algebra, and Geometry in Machine Learning (TAG-ML)},
  pages = 	 {267--277},
  year = 	 {2023},
  volume = 	 {221},
  publisher =    {PMLR},
  }

@misc{human_activity_recognition_using_smartphones_240,
  author       = {Reyes-Ortiz, Jorge and Anguita, Davide and Ghio, Alessandro and Oneto, Luca and and Parra, Xavier},
  title        = {{Human Activity Recognition Using Smartphones}},
  year         = {2013},
  howpublished = {UCI Machine Learning Repository},
  note         = {{DOI}: https://doi.org/10.24432/C54S4K}
}

@inproceedings{anguita2013public,
  title={A public domain dataset for human activity recognition using smartphones.},
  author={Anguita, Davide and Ghio, Alessandro and Oneto, Luca and Parra, Xavier and Reyes-Ortiz, Jorge Luis and others},
  booktitle={Esann},
  volume={3},
  number={1},
  pages={3--4},
  year={2013}
}

@inproceedings{laguna1997database,
  title={A database for evaluation of algorithms for measurement of QT and other waveform intervals in the ECG},
  author={Laguna, Pablo and Mark, Roger G and Goldberg, A and Moody, George B},
  booktitle={Computers in cardiology 1997},
  pages={673--676},
  year={1997},
  organization={IEEE}
}

@article{goldberger2000physiobank,
  title={PhysioBank, PhysioToolkit, and PhysioNet: components of a new research resource for complex physiologic signals},
  author={Goldberger, Ary L and Amaral, Luis AN and Glass, Leon and Hausdorff, Jeffrey M and Ivanov, Plamen Ch and Mark, Roger G and Mietus, Joseph E and Moody, George B and Peng, Chung-Kang and Stanley, H Eugene},
  journal={circulation},
  volume={101},
  number={23},
  pages={e215--e220},
  year={2000},
  publisher={Lippincott Williams \& Wilkins}
}

\appendix

\section{Details of the compared methods}\label{app: methods}

\paragraph{Fukunaga-Koontz Transform (FKT)} The FKT identifies the directions that best separate two datasets by analyzing their covariance structures. In the graph setting, it computes the leading eigenvectors of
\begin{equation}
    \Tilde{L}_{FKT}^A = (L^A + L^B)^{-1}L^B, \qquad 
    \Tilde{L}_{FKT}^B = (L^A + L^B)^{-1}L^A,
\end{equation}
where \( L = D - W \) is the unnormalized Laplacian. Intuitively, it highlights patterns where the graphs differ most, revealing differential connectivity structures. 
We denote the $k$-th leading eigenvectors of $\Tilde{L}_{FKT}^A$ and $\Tilde{L}_{FKT}^B$ as $\delta(fk)^A_k$ and $\delta(fk)^B_k$, respectively. We will use these vectors to compare with the leading differential vectors, $\delta^A$ and $\delta^B$, obtained by DELVE.

\paragraph{Shnitzer et al. (2019)} This method recovers hidden components in multimodal data by constructing spectral operators that differentiate between shared and alternating latent variables, referred to as the $S$ and $A$ operators, respectively. The $A$ operator is antisymmetric, which means it possesses both real and imaginary eigenvalues. To recover the alternating hidden latent parameters, we consider the leading eigenvectors of $A$, focusing on both the real and imaginary parts. In this paper, we will denote these $k$-th leading vectors as $\delta(Sh)^r_k$ and $\delta(Sh)^i_k$.
Notably, Shnitzer et al. (2019) utilize a single alternating operator, $A$, to capture modality differences, without associating it with a specific sensor. In contrast, our approach constructs two separate operators, one for each modality, to extract modality-specific latent variables.

For all methods, we constructed the affinity matrices using Euclidean distances between the samples, with an adaptive bandwidth.


\section{Proofs}
\subsection{Proof of Theorem \ref{thm:convergence_symmetric_normalized}: eigenvector convergence rate for the symmetric normalized Laplacian}
\label{sec:convergence_proof_vksymm}
\begin{proof}
The eigenvectors of the symmetric-normalized Laplacian satisfy $v_k \propto D^{1/2}v_k^{(rw)}$, where $D$ is the diagonal matrix of degrees \cite[Prop 3.(2)]{von2007tutorial}.
We may freely choose a multiplicative constant for eigenvectors, so we take $v_k^{(rw)}$ $D$-normalized such that $(v_k^{(rw)})^T D v_k^{(rw)} = p n$ as in Theorem~5.5, \cite{cheng2022eigen}
 and  $v_k = \frac{D^{1/2}}{\sqrt{p n}} v^{(rw)}_k$ so that
 $\|v_k\|^2 = \frac{(v^{(rw)}_k)^T D v^{(rw)}_k}{p n} = 1$.
By the triangle inequality,
\begin{align}
    \Big\|v_k-  \alpha \phi_k(X)\Big\|
    &=
    \Bigg\|
        \frac{D^{1/2}}{\sqrt{p n}} v^{(rw)}_k
        -
        \alpha \phi_k(X)
    \Big\| \\
    &=
    \Bigg\|
        \frac{D^{1/2}}{\sqrt{p n}} v^{(rw)}_k
        -
        v_k^{(rw)}
        +
        v_k^{(rw)}
        -
        \alpha \phi_k(X)
    \Bigg\| \\
    &\leq
    \Bigg\|
        \frac{D^{1/2}}{\sqrt{p n}} v^{(rw)}_k
        -
        v_k^{(rw)}
    \Bigg\|
    + 
    \|
        v_k^{(rw)} - \alpha \phi_k(X)
    \|. \label{eq:abcd}
\end{align}
By \ref{thm:convergence_rw} \cite[Theorem 5.5]{cheng2022eigen}, the second term is bounded w.p. $> 1-4K^2n^{-10} - (4K+6)n^{-9}$ as
\begin{align} \label{eq:vkrw_bound}
    \|
        v_k^{(rw)} - \alpha \phi_k(X)
    \|
    =
    \OO
    \left(
        \epsilon_n + \epsilon_n^{-d/4-1/2} \sqrt{\log n / n}
    \right).
\end{align}
To bound the first term, we require an additional concentration bound on the degree matrix.
By \cite[Lemma 3.5]{cheng2022eigen}, for large values of $n$, with probability $>1-2n^{-9}$, the following holds uniformly for all $i$:
\begin{align}
    D_{ii}/ n
    =
    m_0 p
    +
    \OO
    \left(
        \epsilon_n
        +
        \epsilon_n^{-d/4}\sqrt{\log n/n}
    \right).
\end{align}
Where $m_0$ is a kernel-dependent constant and $p$ is the uniform density on the manifold.
For the Gaussian kernel $m_0=1$, so we can simply write
\begin{align} \label{eq:Dii_div_mu_n}
    D_{ii}/ p n
    =
    1
    +
    \OO
    \left(
        \epsilon_n
        +
        \epsilon_n^{-d/4}\sqrt{\log n/n}
    \right).
\end{align}
By taking the square root of both sides and substituting $\sqrt{1+x} = 1 + \OO(x)$ for $x \to 0$, we obtain that the same rate of convergence holds for the square root 
\begin{align}\label{eq:concentration_degree}
    \sqrt{D_{ii}/p n}
    =
    1
    +
    \OO \left(
        \epsilon_n
        +
        \epsilon_n^{-d/4}\sqrt{\log n/n}
    \right).
\end{align}
We now bound the first term of the right-hand side of Eq. \eqref{eq:abcd},
\begin{align} \label{eq:vknorm_times_Ddiff}
    \Bigg\|
        \frac{D^{1/2}}{\sqrt{p n}} v^{(rw)}_k
        -
        v_k^{(rw)}
    \Bigg\|
    =
    \Bigg\|
        \bigg(
            \frac{D^{1/2}}{\sqrt{p n}} - I
        \bigg)
        v_k^{(rw)}
    \Bigg\|
    \leq      
    \Bigg\|
        \bigg(
            \frac{D^{1/2}}{\sqrt{p n}} - I
        \bigg)
    \Bigg\|
    \big\|
        v_k^{(rw)}
    \big\|,
\end{align}
where the first term in the RHS denotes the operator norm.
The operator norm of a diagonal matrix is simply the maximum absolute value of the diagonal elements,
\begin{align} \label{eq:sqrtD_bound}
    \Bigg\|
        \bigg(
            \frac{D^{1/2}}{\sqrt{p n}} - I
        \bigg)
    \Bigg\|
    =
    \max_i
    \bigg|
        \frac{D_{ii}^{1/2}}{\sqrt{p n}}-1
    \bigg|
    =
    \OO
    \left(
        \epsilon_n
        +
        \epsilon_n^{-d/4}\sqrt{\log n/n}
    \right).
\end{align}
where this bound holds uniformly for all $i$ with probability at least $1-2n^{-9}$.
We now bound the norm of $v_k^{(rw)}$.
Recall that $v_k^{(rw)}$ is $D$-normalized such that
$(v_k^{(rw)})^T D v_k^{(rw)} / p n = 1$, hence
\begin{align}
     \frac{p n}{\max_i D_{ii}}
     \le
     \| v_k^{(rw)}\|^2
     \le
     \frac{p n}{\min_i D_{ii}}.
\end{align}
By \eqref{eq:Dii_div_mu_n} and by a first order expansion of $1/(1+x)$, both the upper and lower bounds are $1+\OO\left(\epsilon_n + \epsilon_n^{-d/4}\sqrt{\log n/n}\right)$, hence by a first order expansion of $\sqrt{1+x}$, we have
\begin{align}
    \| v_k^{(rw)}\| = \sqrt{\| v_k^{(rw)}\|^2} = 1+\OO\left(\epsilon_n + \epsilon_n^{-d/4}\sqrt{\log n/n}\right).
\end{align}
Plugging this and \eqref{eq:sqrtD_bound} back into \eqref{eq:vknorm_times_Ddiff} yields
\begin{align} \label{eq:vkrw_diff_bound}
       \Bigg\|
        \frac{D^{1/2}}{\sqrt{p n}} v^{(rw)}_k
        -
        v_k^{(rw)}
    \Bigg\|
    =
    \OO\left(\epsilon_n + \epsilon_n^{-d/4}\sqrt{\log n/n}\right).
\end{align}
Finally, substituting \eqref{eq:vkrw_bound} and \eqref{eq:vkrw_diff_bound} into \eqref{eq:abcd} and applying the union bound for the event that one of the bounds fails, we obtain that the following bound holds with probability at least $1-4K^2n^{-10} - (4K+8)n^{-9}$,
\begin{align}
   \|v_k - \alpha \phi_k(X)\| \label{eq:vksym_bound}
   =
   \OO\left(\epsilon_n + \epsilon_n^{-d/4-1/2}\sqrt{\log n/n}\right)  
\qquad \forall k \leq K.
\end{align}
\end{proof}

\subsection[Rate of convergence for alpha in theorems \ref{thm:convergence_rw} and \ref{thm:convergence_symmetric_normalized}]{Rate of convergence for $\alpha$ in theorems \ref{thm:convergence_rw} and \ref{thm:convergence_symmetric_normalized}}
\begin{lemma}\label{lem:alpha_convergence}
    Let $u,\phi$ be two vectors in $\R^n$ such that:
    \begin{enumerate}[label=(\roman*)]
        \item $\|u\|=1$
        \item $\|\phi\| = 1 + \OO( 1/\sqrt{n})$
        \item $\|u-\alpha \phi\| = \OO(\epsilon_n + \epsilon_n^{-d/4-1/2} \sqrt{\log n / n} )$ for $\epsilon_n = o(1)$
    \end{enumerate}
    Then it follows that 
    \begin{align}
        |\alpha| = 1 + \OO(\epsilon_n + \epsilon_n^{-d/4-1/2} \sqrt{\log n / n} ).
    \end{align}
\end{lemma}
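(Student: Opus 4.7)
The plan is to apply the reverse triangle inequality to condition (iii), which immediately yields
$\big|\|u\| - \|\alpha\phi\|\big| \leq \|u - \alpha\phi\|$, and then use conditions (i) and (ii) to isolate $|\alpha|$. Denote the error rate appearing in (iii) by $r_n := \epsilon_n + \epsilon_n^{-d/4-1/2}\sqrt{\log n/n}$. Since $\|u\| = 1$ and $\|\alpha\phi\| = |\alpha|\cdot\|\phi\|$, the reverse triangle inequality gives $\big|1 - |\alpha|\,\|\phi\|\big| = \OO(r_n)$, i.e.\
\begin{align*}
    |\alpha|\,\|\phi\| = 1 + \OO(r_n).
\end{align*}

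Next I would divide by $\|\phi\|$. By (ii), $\|\phi\| = 1 + \OO(1/\sqrt{n})$, so its reciprocal satisfies $1/\|\phi\| = 1 + \OO(1/\sqrt{n})$ by a first-order expansion of $1/(1+x)$. Multiplying the two asymptotic expressions gives
\begin{align*}
    |\alpha| = \bigl(1 + \OO(r_n)\bigr)\bigl(1 + \OO(1/\sqrt{n})\bigr) = 1 + \OO(r_n) + \OO(1/\sqrt{n}).
\end{align*}

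The remaining step, which is really the only subtle point, is to verify that $1/\sqrt{n}$ is absorbed into $r_n$. Since $\epsilon_n = o(1)$, for all sufficiently large $n$ we have $\epsilon_n \leq 1$, so $\epsilon_n^{-d/4-1/2} \geq 1$, and therefore $r_n \geq \epsilon_n^{-d/4-1/2}\sqrt{\log n/n} \geq \sqrt{\log n/n} \geq 1/\sqrt{n}$. Thus $\OO(1/\sqrt{n}) \subseteq \OO(r_n)$, and the cross term $\OO(r_n/\sqrt{n})$ is likewise dominated by $\OO(r_n)$, yielding the claimed bound $|\alpha| = 1 + \OO(r_n)$.

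The argument is almost purely algebraic manipulation of asymptotic expressions, so I do not anticipate any serious obstacle. The only thing one has to be careful about is that the reverse triangle inequality gives $|\alpha|\,\|\phi\|$, not $|\alpha|$ directly, so one must use the assumption on $\|\phi\|$ from (ii); and one must verify that the $1/\sqrt{n}$ slack introduced by (ii) is absorbed into the final rate $r_n$, which is exactly why the lemma assumes the particular form of $r_n$ with the $\epsilon_n^{-d/4-1/2}\sqrt{\log n/n}$ factor rather than a faster rate.
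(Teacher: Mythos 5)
Your proposal is correct and follows essentially the same route as the paper's proof: reverse triangle inequality on $\|u-\alpha\phi\|$, then isolating $|\alpha|$ via a first-order expansion of $1/(1+x)$ for $1/\|\phi\|$, and finally absorbing the $\OO(1/\sqrt{n})$ term using $1/\sqrt{n} = \OO(\epsilon_n + \epsilon_n^{-d/4-1/2}\sqrt{\log n/n})$. Your explicit justification of that last absorption step (via $\epsilon_n \le 1$ eventually, hence $\epsilon_n^{-d/4-1/2} \ge 1$) is in fact slightly more detailed than the paper's, which simply asserts it.
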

\begin{proof}
By the inverse triangle inequality, and (iii), we have
\begin{equation}\label{eq:alpha_convergence_1}
    \big|
        \|u\| - |\alpha|\|\phi\|
    \big|
    \leq
    \|u-\alpha \phi\|
    =
    \OO(\epsilon_n + \epsilon_n^{-d/4-1/2} \sqrt{\log n / n} )
\end{equation}
From (i) and (ii), we have that
\begin{equation}\label{eq:alpha_convergence_2}
    \big|
        \|u\| - |\alpha|\|\phi\|
    \big|
    =
    \big|
        1
        -
        |\alpha|
        \left(
            1+\OO(1/\sqrt{n})
        \right)
    \big|
\end{equation}
By combining \eqref{eq:alpha_convergence_1} and \eqref{eq:alpha_convergence_2}, we obtain
\begin{align}
    |\alpha|
    \left(
        1+\OO(1/\sqrt{n})
    \right)
    =
    1
    +
    \OO(
        \epsilon_n
        +
        \epsilon_n^{-d/4-1/2} \sqrt{\log n / n}
    ).
\end{align}
Since $1/(1+x) = 1-x + \OO(x^2)$, we have that $1/(1+\OO(1/\sqrt{n})) = 1 + \OO(1/\sqrt{n})$. Dividing both sides by $1+\OO(1/\sqrt{n})$ we obtain
\begin{align}
    |\alpha|
    &=
    \left(
        1
        +
        \OO(
            \epsilon_n
            +
            \epsilon_n^{-d/4-1/2} \sqrt{\log n / n}
        )
    \right)
        \left(
        1+\OO(1/\sqrt{n})
    \right)
    \\
    &=
    1
    +
    \OO(
        \epsilon_n
        +
        \epsilon_n^{-d/4-1/2} \sqrt{\log n / n}
    ),
\end{align}
where the last equality follows from the fact that 
\begin{align}
    1/\sqrt{n} = \OO(\epsilon_n+\epsilon_n^{-d/4-1/2} \sqrt{\log n / n}).
\end{align}
\end{proof}

\subsection{Auxiliary lemmas for the proof of lemma \ref{lem:concentration} and Theorem \ref{thm:main_theorem}}

\begin{lemma}
\label{lem:lemma 1}
Let $v^A = u^A - \epsilon^A$ and $v^B = u^B -\epsilon^B$ be two vectors that satisfy:
\begin{enumerate}
    \item $\|v^A\|=\|v^B\|=1$.
    \item The vector $u^A$ is proportional to $u^B$ such that $u^A = cu^B$ for some constant $c$.
    \item The vectors norms $\|\epsilon^A\|,\|\epsilon^B\|$ are both $\OO(\epsilon_n)$.    
\end{enumerate}
Then $|(v^A)^T v^B| = 1-\OO(\epsilon_n)$.    
\end{lemma}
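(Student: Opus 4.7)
The plan is to expand the inner product $(v^A)^T v^B$ directly using the definitions and the proportionality $u^A = cu^B$, and then control each piece with the assumed norm bounds plus Cauchy--Schwarz.

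First, I would establish the basic quantitative facts about $u^A$, $u^B$, and $c$. Since $v^B = u^B - \epsilon^B$ with $\|v^B\|=1$ and $\|\epsilon^B\| = \OO(\epsilon_n)$, the triangle inequality (and its reverse) give $\|u^B\| = 1 + \OO(\epsilon_n)$, and symmetrically $\|u^A\| = 1 + \OO(\epsilon_n)$. Because $u^A = c u^B$, taking norms yields $|c| = \|u^A\|/\|u^B\| = (1+\OO(\epsilon_n))/(1+\OO(\epsilon_n)) = 1 + \OO(\epsilon_n)$ after a first-order expansion of $1/(1+x)$.

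Next I would substitute $v^A = c u^B - \epsilon^A$ and $v^B = u^B - \epsilon^B$ into the inner product to obtain
\begin{align}
    (v^A)^T v^B
    =
    c\|u^B\|^2
    -
    c (u^B)^T \epsilon^B
    -
    (\epsilon^A)^T u^B
    +
    (\epsilon^A)^T \epsilon^B.
\end{align}
The leading term satisfies $|c\|u^B\|^2| = |c|\cdot\|u^B\|^2 = 1 + \OO(\epsilon_n)$ by the previous paragraph. For each of the remaining three terms, Cauchy--Schwarz together with $\|u^B\| = 1+\OO(\epsilon_n)$, $|c| = 1+\OO(\epsilon_n)$, and $\|\epsilon^A\|,\|\epsilon^B\| = \OO(\epsilon_n)$ gives a bound of $\OO(\epsilon_n)$.

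Finally I would combine: $(v^A)^T v^B = c\|u^B\|^2 + \OO(\epsilon_n)$, so $|(v^A)^T v^B| = 1 + \OO(\epsilon_n)$. Since Cauchy--Schwarz also enforces $|(v^A)^T v^B| \leq \|v^A\|\|v^B\| = 1$, the two-sided estimate collapses to $|(v^A)^T v^B| = 1 - \OO(\epsilon_n)$, as claimed. The only subtlety is keeping track of the sign of $c$ and ensuring the $\OO(\epsilon_n)$ error terms absorb both the perturbations and the deviation of $|c|$ from $1$; this is routine given the Cauchy--Schwarz ceiling at $1$, so I do not expect a real obstacle.
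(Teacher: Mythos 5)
Your proposal is correct and follows essentially the same route as the paper: expand $(v^A)^T v^B = (u^A-\epsilon^A)^T(u^B-\epsilon^B)$, use the collinearity of $u^A$ and $u^B$ to show the leading term is $1-\OO(\epsilon_n)$ in magnitude, and absorb the cross terms via Cauchy--Schwarz; your explicit estimate $|c|=1+\OO(\epsilon_n)$ is a harmless detour the paper avoids by writing $|(u^A)^T u^B|=\|u^A\|\,\|u^B\|$ directly. Your explicit appeal to the Cauchy--Schwarz ceiling $|(v^A)^T v^B|\leq 1$ to convert the two-sided estimate into the stated one-sided form is a small touch the paper leaves implicit.
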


\begin{proof}

We begin by deriving a bound over $||u^A||$ and $||u^B||$ using the reverse triangle inequality, 
\begin{align}
  & ||v^A|| = ||u^A - \epsilon^A|| \geq ||u^A|| - ||\epsilon^A|| \notag \\& ||v^A|| \geq ||u^A || - ||\epsilon^A||. 
\end{align}
Thus, combined with assumptions (1) and (3), we get, \begin{align}\label{eq: u_A norm}
     ||u^A ||  \geq  1-  \OO(\epsilon_n).  
\end{align}
A similar bound can be derived for $||u^B||$.
Next, we show that $|(u^A)^T u^B| \geq 1 - \OO(\epsilon_n)$.
Since $u^A = c u^B$, then $\frac{|(u^A)^T u^B|}{||u^A|| ||u^B||} =  1$. Thus, combining \eqref{eq: u_A norm} and the above, we get
\begin{align}
    |(u^A)^T u^B| =||u^A|| \cdot ||u^B|| \geq 1 - \OO(\epsilon_n).
\end{align}
Finally, we derive a bound for $|(v^A)^T v^B|$. By the reverse triangle inequality
\begin{align}
    |(v^A)^T v^B|
    &=
    |(u^A - \epsilon^A)^T (u^B - \epsilon^B)| \\
    &\geq
    |(u^A)^T u^B| - |(\epsilon^A)^T u^B| - |(\epsilon^B)^T u^A| - |(\epsilon^A)^T \epsilon^B|.
\end{align}
We showed that the first term is lower bounded by $1-\OO(\epsilon_n)$. By the assumed bound on the norms of $\epsilon^A, \epsilon^B$, the fourth term is also bounded by $\OO(\epsilon_n^2)$. The second and third terms can be bounded by the Cauchy-Schwarz inequality as follows,
\begin{align}
    |(\epsilon^A)^T u^B| \leq ||\epsilon^A|| ||u^B|| = \OO(\epsilon_n).
\end{align}
Therefore, we have shown  that
\begin{align}
    |(v^A)^T v^B| \geq 1 - \OO(\epsilon_n).    
\end{align}
\end{proof}
\begin{lemma} \label{lem:aux_2}
    Let $V, U \in \R^{n \times K}$ be orthogonal matrices with columns $v_i$ and $u_i$ such that  
    \begin{align}
        v_i^T u_i \geq  1-\varepsilon  \qquad \forall i  = 1,\ldots, K, 
    \end{align}
    Then $\|V-U\|^2 \leq 2K\varepsilon$.    
\end{lemma}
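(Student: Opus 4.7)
The plan is to interpret $\|V-U\|^2$ as the squared Frobenius norm, which decomposes columnwise as $\|V-U\|_F^2 = \sum_{i=1}^K \|v_i - u_i\|^2$. This is the natural reading given that the hypothesis is phrased columnwise and the conclusion scales linearly with $K$; under the operator-norm reading the bound would not follow from pairwise column information alone.

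From there the argument is a one-line expansion. For each column I would write
\begin{align}
\|v_i - u_i\|^2 = \|v_i\|^2 + \|u_i\|^2 - 2 v_i^T u_i,
\end{align}
and then invoke the assumption that $V$ and $U$ have orthonormal columns, so that $\|v_i\| = \|u_i\| = 1$. This reduces the right-hand side to $2 - 2 v_i^T u_i$, which by the hypothesis $v_i^T u_i \geq 1 - \varepsilon$ is at most $2\varepsilon$.

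Summing over $i = 1,\ldots,K$ yields $\|V-U\|_F^2 \leq 2K\varepsilon$, which is the claim. There is no real obstacle here; the only modeling choice to flag is the Frobenius interpretation of $\|\cdot\|$, and the only ingredients used are the orthonormality of columns (coming from $V,U$ being orthogonal matrices) together with the per-column inner-product lower bound.
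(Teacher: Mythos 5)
Your computation is correct and is essentially the paper's: the per-column identity $\|v_i-u_i\|^2 = 2(1 - v_i^T u_i) \le 2\varepsilon$ summed over columns gives $\|V-U\|_F^2 \le 2K\varepsilon$. The only divergence is your interpretation of $\|\cdot\|$: the paper means the spectral (operator) norm, as that is what is needed when the lemma is later combined with Lemma \ref{lem:aux_4} in the proof of the main theorem, and it closes the gap in one line via $\|V-U\| \le \|V-U\|_F$. So your aside that the bound ``would not follow from pairwise column information alone'' under the operator-norm reading is mistaken --- your Frobenius estimate immediately implies the operator-norm statement, which is exactly how the paper argues.
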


\begin{proof}
If $v_i^T u_i \geq 1-\epsilon$ then $\|v_i-u_i\|^2 = 2(1-v_i^T u_i) \leq 2\varepsilon$.
The spectral norm of a matrix is bounded by its Frobenius norm. Thus,
\begin{align}
    \|V-U\|^2 \leq \|V-U\|_F^2 \leq 2K\varepsilon. 
\end{align}
\end{proof}

\begin{lemma} \label{lem:aux_3}
 Let $A \in \R^{n \times n}$ be a symmetric positive semi-definite matrix, with eigenvectors and eigenvalues $u_\ell,\lambda_\ell$. The spectral decomposition of $A$ is given by
 $A = \sum_\ell \lambda_\ell u_\ell u_\ell^T.$
 Let $V \in \R^{n \times K}$ be a matrix whose columns $\{v_i\}_{i=1}^K$ are orthogonal.
 If $|v_i^T u_j|\leq \epsilon \geq 1/\sqrt{n}$ for all $(i,j)$ then
 \begin{align}
    \|A - (I-VV^T)A(I-VV^T)\|\leq K\epsilon^2\sum_\ell \lambda_\ell.
 \end{align} 
\end{lemma}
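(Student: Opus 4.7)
The plan is to rewrite $A - QAQ$ in a form that exposes the projection onto $\mathrm{range}(V)$ and then exploit the hypothesis through the pointwise bound $\|V^T u_\ell\|^2 = \sum_{i=1}^K (v_i^T u_\ell)^2 \leq K\epsilon^2$ that holds for every eigenvector.

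Set $P = VV^T$ and $Q = I - P$. Direct expansion, using $Q^2 = Q$ and symmetry, gives
\begin{equation*}
    A - QAQ = PA + AP - PAP.
\end{equation*}
To bound the operator norm of this symmetric matrix, I would use its variational characterization $\|M\| = \sup_{\|x\|=1}|x^T M x|$. For a unit vector $x \in \R^n$, decompose $x = y + z$ with $y = Px$ and $z = Qx$. Since the columns of $V$ are orthonormal we have $V^T z = 0$, so $x^T A x = y^T A y + 2\, y^T A z + z^T A z$ while $(Qx)^T A (Qx) = z^T A z$. Subtracting,
\begin{equation*}
    x^T (A - QAQ)\, x \;=\; y^T A y + 2\, y^T A z.
\end{equation*}

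The quadratic term $y^T A y$ delivers the target bound directly. Writing $y = V\alpha$ with $\alpha = V^T x$ and $\|\alpha\| \leq 1$, we have $y^T A y \leq \|V^T A V\|$, and since $V^T A V \in \R^{K\times K}$ is positive semi-definite, its operator norm is bounded by its trace:
\begin{equation*}
    \mathrm{tr}(V^T A V) \;=\; \sum_{i=1}^K v_i^T A v_i \;=\; \sum_{i=1}^K \sum_{\ell} \lambda_\ell (v_i^T u_\ell)^2 \;\leq\; K\epsilon^2 \sum_\ell \lambda_\ell,
\end{equation*}
which is already of the claimed form. Note that the hypothesis $\epsilon \geq 1/\sqrt{n}$ is consistent with the requirement that the unit vectors $v_i$ expand in the full eigenbasis $\{u_\ell\}$, since $1 = \|v_i\|^2 = \sum_\ell (v_i^T u_\ell)^2 \leq n\epsilon^2$.

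The main obstacle is the cross term $2\, y^T A z$. Here I would use that $A$ is PSD, so the bilinear form $(u,w)\mapsto u^T A w$ satisfies Cauchy--Schwarz,
\begin{equation*}
    |y^T A z| \;\leq\; \sqrt{y^T A y}\,\sqrt{z^T A z},
\end{equation*}
and combine the trace bound from the previous paragraph with the elementary estimate $z^T A z \leq \|A\|\,\|z\|^2 \leq \sum_\ell \lambda_\ell$ to absorb this term into the same order as the quadratic piece. An alternative, fully coordinate-free route bounds the three summands of $PA + AP - PAP$ separately: the identity $\|PA\|^2 = \|APA\| \leq \mathrm{tr}(APA) \leq \|A\|\cdot\mathrm{tr}(V^T A V)$ together with $\|PAP\| \leq \mathrm{tr}(V^T A V)$ feeds the same trace computation into each term and yields the stated inequality $\|A - (I-VV^T)A(I-VV^T)\| \leq K\epsilon^2 \sum_\ell \lambda_\ell$.
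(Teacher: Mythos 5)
Your reduction is set up correctly ($A-QAQ=PA+AP-PAP$, and $x^T(A-QAQ)x=y^TAy+2\,y^TAz$ with $y=Px$, $z=Qx$), and your treatment of the quadratic term is sound: $y^TAy\le\mathrm{tr}(V^TAV)=\sum_i\sum_\ell\lambda_\ell (v_i^Tu_\ell)^2\le K\epsilon^2\sum_\ell\lambda_\ell$ is essentially the same computation as steps (ii)--(iii) of the paper's proof. The genuine gap is the cross term $2\,y^TAz$. Your Cauchy--Schwarz estimate gives $|y^TAz|\le\sqrt{y^TAy}\,\sqrt{z^TAz}\le\sqrt{K}\,\epsilon\sum_\ell\lambda_\ell$, which is of order $\epsilon$, not $\epsilon^2$; in the regime where the lemma is invoked ($\epsilon\to 0$) this exceeds the claimed bound $K\epsilon^2\sum_\ell\lambda_\ell$ by a factor of order $1/(\sqrt{K}\epsilon)$, so it cannot be ``absorbed'' into the quadratic piece. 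The alternative route has the same defect: $\|PA\|\le\sqrt{\|A\|\,\mathrm{tr}(V^TAV)}\le\sqrt{K}\,\epsilon\sum_\ell\lambda_\ell$, again only order $\epsilon$. Neither version of your argument yields the stated inequality.

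The paper avoids the cross term by a different mechanism: it asserts (its step (i)) that the maximizer $z^\ast$ of the quadratic form lies in $\mathrm{span}\{v_i\}$, so that $(z^\ast)^T(A-QAQ)z^\ast=(z^\ast)^TAz^\ast$ and only the inner products $v_i^TAv_j=\OO(\epsilon^2\sum_\ell\lambda_\ell)$ appear; that restriction step is exactly the ingredient your argument is missing, and it is what produces the $\epsilon^2$ rate. Be aware, though, that your decomposition shows the difficulty is not an artifact of lossy estimates: the off-diagonal block $PAQ$ is genuinely of order $\epsilon$ in general. For instance, with $A=\lambda u_1u_1^T$, $K=1$, and a unit $v$ spread over the eigenbasis so that $v^Tu_1=\epsilon$ and $|v^Tu_j|\le\epsilon$ for all $j$, one computes $\|A-QAQ\|\approx\lambda\epsilon$, not $\lambda\epsilon^2$. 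So no sharpening of Cauchy--Schwarz can close your gap, and it also shows that the paper's step (i) is doing essential work and itself deserves scrutiny: a symmetric matrix whose quadratic form vanishes on $\mathrm{span}(V)^{\perp}$ need not have its leading eigenvector in $\mathrm{span}(V)$ (your block $PAQ$ is exactly the obstruction). If you want a complete proof along the paper's lines, you must genuinely justify that it suffices to test the quadratic form on $\mathrm{span}(V)$, or else accept a bound of order $\sqrt{K}\,\epsilon\sum_\ell\lambda_\ell$, which is what your argument actually proves.
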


\begin{proof}
Since $A$ is symmetric, the spectral norm of $A - (I-VV^T)A(I-VV^T)$ is equal to the largest eigenvalue, given by
\begin{equation}\label{eq:aux_lemma_max}
\max_{\|z\|=1} \Big|z^T \big(A - (I-VV^T)A(I-VV^T) \big) z\big|.  
\end{equation}
The proof of our claim is given in three steps:
\begin{enumerate}
    \item We prove that the maximizer $z^\ast$ of Eq. \eqref{eq:aux_lemma_max} is a linear combination of the vectors in $\{v_i\}$ such that $z^\ast = \sum_{j=1}^K \alpha_j v_j$. This implies that  
    \begin{align}\label{eq:aux_lemma_z_ast}
        (z^\ast)^T \big(A - (I-VV^T)A(I-VV^T) \big) z^\ast = (z^\ast)^T A z^\ast.
    \end{align}
    \item For every pair of vectors $v_i,v_j$ we derive the bound
    \begin{align}
        v_j^T A v_i \leq \epsilon^2 \sum_\ell \lambda_\ell.
    \end{align}
    \item Combining (i) + (ii), we show that the maximizer is bounded by 
    \begin{align}
        z^\ast A z^\ast = \sum_{i,j} \alpha_i \alpha_j \epsilon^2 \sum_l \lambda_l\,.
    \end{align}
     Together with the bound $\sum_{i,j} |\alpha_i| |\alpha_j|\leq K$ this concludes the proof.
\end{enumerate}
For step (i), any vector $z$ orthogonal to $V$ satisfies,
\begin{align}
    z^T \big(A - (I-VV^T)A(I-VV^T) \big) z = z^T A z - z^T A z = 0.
\end{align}
Thus, the matrix $\big(A - (I-VV^T)A(I-VV^T) \big)$ has a zero eigenvalue with multiplicity $n-K$. Since it is a symmetric matrix, it exhibits a spectral decomposition of $K$ eigenvectors with non-zero eigenvalues all in the span of $\{v_i\}_{i=1}^K$. This implies specifically that the leading eigenvector $z^\ast$ is a linear combination of $\{v_i\}_{i=1}^K$. Thus, $VV^T z^\ast = z^\ast$, and hence
\begin{align}
    (z^\ast)^T \big(A - (I-VV^T)A(I-VV^T) \big) z^\ast = (z^\ast)^T A z^\ast.
\end{align}
Step (ii) follows almost immediately from $|v_i^T u_j| \leq \epsilon$,
\begin{align}
    |v_j^T A v_i|
    =
    \Big|v_j^T \sum_{\ell=1}^n \lambda_\ell u_\ell u_\ell^T v_i\Big|
    \leq
    \sum_{\ell=1}^n \lambda_\ell |u_\ell^T v_j||u_\ell^T v_i|
    \leq
    \epsilon^2 \sum_{\ell=1}^n \lambda_\ell \,.
\end{align}
For step (iii), let $z^\ast = \sum_{i=1}^K \alpha_i v_i$. By Eq. \eqref{eq:aux_lemma_z_ast}, the maximal eigenvalue is bounded by
\begin{align}
    |z^\ast A z^\ast|
    =
    \Big|\sum_{ij} \alpha_i\alpha_j v_i^T A v_j \Big|
    \leq
    \sum_{ij} |\alpha_i\alpha_j v_i^T A v_j|  \leq
    \sum_\ell \lambda_\ell \epsilon^2 \sum_{ij} |\alpha_i \alpha_j|.
\end{align}
Finally, we have that $\sum_{ij} |\alpha_i \alpha_j| = \Big(\sum_i |\alpha_i|\Big)^2$. The maximum value of $\sum_i |\alpha_i|$, under the unit norm constraint $\sum_i \alpha_i^2=1$ is attained for $\alpha_i = 1/\sqrt{K}$, which gives $\sum_{ij} |\alpha_i \alpha_j| = K$.
\end{proof}


\begin{lemma}\label{lem:aux_4}
Let $M$ be a symmetric and positive semi-definite matrix. Let $V,U \in \R^{n \times K}$ be two orthogonal matrices such that $V^TV = U^TU = I$, and let $Q_1 = I-VV^T$ and $Q_2 = I- UU^T$ be two projection matrices. Then,
\begin{align}
    \|Q_1MQ_1-Q_2MQ_2\| \leq 4\|M\| \|U-V\|.
\end{align}
\end{lemma}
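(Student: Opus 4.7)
The plan is to decompose the difference $Q_1 M Q_1 - Q_2 M Q_2$ via the standard add–subtract trick and then reduce the bound on the projector difference to a bound on $\|U - V\|$.

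First I would write
\begin{align}
    Q_1 M Q_1 - Q_2 M Q_2
    =
    Q_1 M (Q_1 - Q_2) + (Q_1 - Q_2) M Q_2,
\end{align}
and apply submultiplicativity of the operator norm together with the triangle inequality. Since $Q_1$ and $Q_2$ are orthogonal projections, $\|Q_1\|, \|Q_2\| \leq 1$, so this gives
\begin{align}
    \|Q_1 M Q_1 - Q_2 M Q_2\| \leq 2 \|M\| \|Q_1 - Q_2\|.
\end{align}

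Next, I would bound $\|Q_1 - Q_2\| = \|VV^T - UU^T\|$ in terms of $\|U - V\|$. The natural decomposition is
\begin{align}
    VV^T - UU^T = V(V - U)^T + (V - U)U^T,
\end{align}
and since $V$ and $U$ have orthonormal columns, $\|V\| = \|U\| = 1$, so submultiplicativity yields
\begin{align}
    \|VV^T - UU^T\| \leq \|V - U\| + \|U - V\| = 2\|U - V\|.
\end{align}

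Combining the two inequalities gives $\|Q_1 M Q_1 - Q_2 M Q_2\| \leq 4 \|M\| \|U - V\|$, as required. There is no real obstacle here: the only delicate point is remembering that positive semi-definiteness of $M$ is not actually needed for this spectral-norm bound (it is only the structural hypothesis of the lemma), and that the operator-norm bounds $\|V\|, \|U\| \leq 1$ follow from $V^T V = U^T U = I$ rather than from any symmetry assumption.
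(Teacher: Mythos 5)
Your proof is correct, and it takes a genuinely different route from the paper. The paper reduces the spectral norm of the symmetric matrix $Q_1MQ_1-Q_2MQ_2$ to the maximum of the quadratic form $|x^T(Q_1MQ_1-Q_2MQ_2)x|$, writes it as a difference of squares $\bigl|\,\|M^{1/2}Q_1x\|^2-\|M^{1/2}Q_2x\|^2\,\bigr|$ using the positive semi-definite square root $M^{1/2}$, factors it into (sum)$\times$(difference), and bounds the two factors by $2\|M^{1/2}\|$ and $\|M^{1/2}\|\,\|Q_1-Q_2\|$ respectively, before bounding $\|Q_1-Q_2\|=\|VV^T-UU^T\|\le 2\|V-U\|$ by a similar quadratic-form argument. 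You instead use the telescoping decomposition $Q_1MQ_1-Q_2MQ_2=Q_1M(Q_1-Q_2)+(Q_1-Q_2)MQ_2$ together with $\|Q_i\|\le 1$, and then $VV^T-UU^T=V(V-U)^T+(V-U)U^T$ with $\|V\|=\|U\|=1$; both steps are valid and yield the same constant $4$. Your approach is more elementary and strictly more general: it never uses the symmetry or positive semi-definiteness of $M$ (as you correctly point out), whereas the paper's argument needs the existence of $M^{1/2}$ and the symmetry of the difference to equate the spectral norm with the extremal quadratic form. What the paper's route buys is mainly stylistic consistency with the quadratic-form arguments of the adjacent auxiliary lemmas; your decomposition is shorter and avoids the minor notational slips ($V_1,V_2$ versus $V,U$, and a missing square) present in the paper's version.
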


\begin{proof}
Since $M$ is symmetric, the spectral norm is equal to the largest eigenvalue, given by,
\begin{align}
    \max_{\|x\|=1} \Big| x^T(Q_1MQ_1-Q_2MQ_2)x\Big|.
\end{align}
As $M$ is positive semi-definite, it has a square root, denoted $M^{0.5}$.
Thus for any $x$ we have
\begin{align}\label{eq:aux_bound_qmq}
    |x^T(Q_1MQ_1-&Q_2MQ_2)x|
    =
    |x^TQ_1MQ_1x-x^TQ_2MQ_2x| 
    \\
    &=
    \big| \|M^{0.5}Q_1x\|^2-\|M^{0.5}Q_2x\|^2\big| \nonumber
    \\
    &=
    \big| \|M^{0.5}Q_1x\| + \|M^{0.5}Q_2x\| \big|\cdot\big|\|M^{0.5}Q_2x\|-\|M^{0.5}Q_1x\|\big|. \nonumber
\end{align}
By Cauchy-Shwartz we have $\|M^{0.5}Q_1x\| \leq \|M^{0.5}\|$ and $\|M^{0.5}Q_1x\|\leq \|M^{0.5}\|$. By the reverse triangle inequality,
\begin{align}\label{eq:aux_bound_mq}
\big|\|M^{0.5}Q_2x\|-\|M^{0.5}Q_1x\|\big| &\leq \big\|M^{0.5}(Q_1-Q_2)x\big\|
\\
&\leq \|M^{0.5}\| \|Q_1-Q_2\| = \|M^{0.5}\|\|VV^T-UU^T)\|. \nonumber
\end{align}
To bound the norm $\|VV^T-UU^T\|$, we apply the reverse triangle inequality again. For every vector $x$ we have,
\begin{align}\label{eq:aux_bound_vvt}
    \big |x^T(VV^T-UU^T)x \big |
    &\leq
    \big| \|V^Tx\|^2 - \|U^T x\|^2\big| \\
    &= \big(\|V^Tx\|+\|U^Tx\|\big)  \big(\|V^Tx\|-\|U^Tx\|\big). \nonumber \\
    &\leq 2 \big\|(V-U)^Tx\big\| \leq 2 \|V-U\|. \nonumber
\end{align}
Combining the bounds in Eqs. \eqref{eq:aux_bound_vvt}, \eqref{eq:aux_bound_mq} and \eqref{eq:aux_bound_qmq} yields
\begin{align}
    \|Q_1MQ_1-Q_2MQ_2\| \leq 4\|M\|\|V-U\|.
\end{align}
\end{proof}

\subsection{Proof of Lemma \ref{lem:concentration}}

\begin{proof}

By Theorem \ref{thm:convergence_symmetric_normalized}, with probability $>1-4K^2n^{-10} - (4K+8)n^{-9}$ there exists a constant $\alpha^A$ such that the unit-normalized eigenvector $v^A_{l,k}$ satisfies
\begin{equation} 
    \bigg\| \frac{\alpha^A}{\sqrt{np^A}} \rho_X(f^A_{l,k}) - v_{l,k}^A \bigg\|_2
    =
    \OO\left(\epsilon_n + \epsilon_n^{-d/4-1/2}\sqrt{\log n/n}\right), 
\end{equation}
for all eigenfunctions $f^A_{l,k}$ that correspond to eigenvalues $\eta^A_{l,k}$ that are among the smallest $K$ eigenvalues of the Laplace-Beltrami operator on $\M^A$.
The same result holds for $\alpha^B, f^B_{m,k'}, \eta^B_{m,k'}$ and $v^B_{m,k'}$.
We denote by $\epsilon_{l,k}^A$ the difference $\frac{\alpha^A}{\sqrt{n p_A}}\rho_X(f^A_{l,k}) -v_{l,k}^A$ and bound the inner product of $v_{l,k}^A$ and $v_{m,k'}^B$ using the triangle inequality, 
\begin{align}\label{eq:triangle}
|(v_{l,k}^A)^T v_{m,k'}^B| &=  
\bigg| \bigg(\frac{\alpha^A}{\sqrt{n p_A}} \rho_X(f^A_{l,k}) - \epsilon^A_{l,k} \bigg)^T  \bigg(\frac{\alpha^B}{\sqrt{n p_B}} \rho_X(f_{m,k'}^B) - \epsilon_{m,k'}^B \bigg) \bigg| \nonumber
\\
&\leq  \frac{\alpha^A\alpha^B}{n \sqrt{p_Ap_B}} |\rho_X(f_{l,k}^A)^T \rho_X(f_{m,k'}^B) \big | 
\\
&+
\frac{\alpha^B}{\sqrt{n p_B}}|(\epsilon_{l,k}^A)^T \rho_X(f^B_{m,k'})| 
+ 
\frac{\alpha^A}{\sqrt{n p_A}}|(\rho_X(f^A_{l,k}))^T \epsilon_{m,k'}^B| + 
|(\epsilon_{l,k}^A)^T \epsilon_{m,k'}^B|. \nonumber
\end{align}
The second and third terms in Eq. \eqref{eq:triangle} can be bounded via the Cauchy-Schwartz inequality. For example, the second term is bounded by
\begin{align}
    \frac{\alpha^B}{\sqrt{n p_B}}|(\epsilon_{l,k}^A)^T \rho_X(f^B_{m,k'})|
    \leq
    \frac{\alpha^B}{\sqrt{n p_B}}\|\epsilon_{l,k}^A\| \|\rho_X(f^B_{m,k'})\|.
\end{align}
Lemma 3.4 in \cite{cheng2022eigen}, w.p. $> 1 - 2K^2 n^{-10}$, the term $\frac{1}{np_B}\|\rho_X(f^B_{m,k'})\|^2$ is $1+\OO_p(\log(n)/n)$.
This result together with the bound on $\|\epsilon^A_{l,k}\|,\|\epsilon^B_{m,k'}\|$ in Theorem \ref{thm:convergence_symmetric_normalized}, and the concentration of $\alpha^B$ by Lemma \ref{lem:alpha_convergence} yields,
\begin{equation}\label{eq:convergence_rate_eps_rho}
\frac{\alpha^B}{\sqrt{n p_B}}|(\epsilon_{l,k}^A)^T \rho_X(f^B_{m,k'})| 
+ 
\frac{\alpha^A}{\sqrt{n p_A}}|(\epsilon_{m,k'}^B)^T \rho_X(f^A_{l,k})| = \OO\bigg( \epsilon_n + \sqrt{\frac{\log n}{n \epsilon_n^{d/2+1} }}\bigg).
\end{equation}
The fourth term is negligible,
\begin{equation}\label{eq:convergence_rate_eps_squared}
    |(\epsilon_{m,k'}^B)^T\epsilon_{l,k}^A|
    \leq
    \|(\epsilon_{m,k'}^B)\|\|\epsilon_{l,k}^A\|
    =
    \OO \Bigg( \bigg( \epsilon_n + \sqrt{\frac{\log n}{n \epsilon_n^{d/2+1} }}\bigg)^2 \Bigg)
\end{equation}
The first term of Eq. \eqref{eq:triangle} is equal to,
\begin{align}\label{eq:average}
    \tfrac{\alpha^A\alpha^B}{n\sqrt{p_Ap_B}}
    &\left|
        \rho_X(f^A_{l,k})^T \rho_X(f^B_{m,k'})
    \right| \\
    &=
    \tfrac{\alpha^A\alpha^B}{n\sqrt{p_Ap_B}}
    \left|
        \big(
            \rho_{\pi_1(X)}(f^{(1)}_l) \cdot \rho_{\pi_3(X)}(f^{(3)}_k)
        \big)^T
        \big(
            \rho_{\pi_2(X)}(f^{(2)}_m) \cdot \rho_{\pi_3(X)}(f^{(3)}_{k'})
        \big)
    \right| \nonumber
    \\
    &=
    \tfrac{\alpha^A\alpha^B}{n\sqrt{p_Ap_B}}
    \left|
        \sum_{i=1}^n
        f^{(1)}_l(\pi_1(x_i))
        f^{(2)}_m(\pi_2(x_i))
        f^{(3)}_k(\pi_3(x_i))
        f^{(3)}_{k'}(\pi_3(x_i))
    \right|. \nonumber
\end{align}
Consider the summands in Eq.~\eqref{eq:average}.
In our model, the coordinates on different manifolds are independently drawn. Thus, taking the expectation of the summands gives,\begin{align}\label{eq:expectation_break}
    &\E[f^{(1)}_l(\pi_1(x))f^{(2)}_m(\pi_2(x)) f^{(3)}_k(\pi_3(x)) f^{(3)}_{k'}(\pi_3(x))] \notag \\
    &= 
    \E[f^{(1)}_l(\pi_1(x))]
    \E[f^{(2)}_m(\pi_2(x))]
    \E[ f^{(3)}_k(\pi_3(x)) f^{(3)}_{k'}(\pi_3(x))]. 
\end{align}
In addition, by the orthogonality of eigenfunctions with different eigenvalues, we have,
\begin{align}\label{eq:zero_mean}
    &\E[f^{(1)}_l(\pi_1(x))] = 0  \quad \forall l > 0, \notag \\
    &\E[f^{(2)}_m(\pi_2(x))] = 0 \quad \forall m > 0, \\
    &\E[f^{(3)}_k(\pi_3(x)) f^{(3)}_{k'}(\pi_3(x))] = 0 \quad \forall (k \neq k'). \notag
\end{align}
As for the second moment, we have
\begin{align}\label{eq:unit_variance}
    &\E[(f^{(1)}_l(\pi_1(x)))^2] = 1  \quad \forall l, \notag \\
    &\E[(f^{(2)}_m(\pi_2(x)))^2] = 1 \quad \forall m, \\ 
    &\E\left[ \left(f^{(3)}_k(\pi_3(x)) f^{(3)}_{k'}(\pi_3(x))\right)^2 \right] \leq 1 \quad \forall (k, k').\notag
\end{align}
Eqs. \eqref{eq:expectation_break}, \eqref{eq:zero_mean} and \eqref{eq:unit_variance} imply that unless $l=m=0$ and $k=k'$, the sum in Eq.~\eqref{eq:average}  is over centred i.i.d random variables with variance $\le 1$.
Since the terms are i.i.d., the variance of the sum is the sum of variances and is thus bounded by $n$.
Therefore,
\begin{equation}\label{eq:convergence_rate_rho}
\frac{\alpha^A\alpha^B}{n\sqrt{p_Ap_B}}
\left|
    \rho_X(f_{l,k}^A)^T\rho_X(f_{m,k'}^B)
\right|
=
\frac{\alpha^A\alpha^B}{\sqrt{p_A p_B}}
\OO_p(1/\sqrt{n})
=
\OO_p(1/\sqrt{n}).
\end{equation}
It is easy to see that if $\epsilon_n < 1$ then $1/\sqrt{n} < \epsilon_n^{-d/4-1/2}\sqrt{\log n / n}$ and if $\epsilon_n \ge 1$ then $1/\sqrt{n} < \epsilon_n$.
In both cases, $1/\sqrt{n}$ is negligible with respect to the other terms (Eqs.  \eqref{eq:convergence_rate_eps_rho} and \eqref{eq:convergence_rate_eps_squared}).
Thus for $k \neq k'$, we have
\begin{align}
    (v^A_{l,k})^Tv^B_{m,k'}
    =
    \OO \left(
        \epsilon_n
        +
        \sqrt{\frac{\log n}{n \epsilon_n^{d/2+1} }}
    \right).
\end{align}
The only remaining case is when $l=m=0$ and $k=k'$.
Since $f_0^{(1)},f_0^{(2)}$ are both constant functions, then by Eq.~\eqref{eq:eigenfunctions_product}) we have
\begin{align}
    \rho_X(f_{0,k}^A) \sim \rho_{\pi_1(X)} (f_0^{(1)})\rho_{\pi_3(X)} (f_k^{(3)}) \sim \rho_{\pi_3(X)} f_k^{(3)}.
\end{align}
A similar derivation can be done for $\rho_X(f_{0,k'}^B)$ which implies that $\rho_X(f_{0,k}^A) \sim \rho_X(f_{0,k'}^B)$.
Let $u^A= \frac{\alpha^A}{\sqrt{n p_A}}\rho_X(f_{0,k}^A)$ and $u^B = \frac{\alpha^B}{\sqrt{n p_B}}\rho_X(f_{0,k}^B)$. We invoke Lemma \ref{lem:lemma 1} with $u^A,u^B$ and $\epsilon_{0,k}^A,\epsilon_{0,k}^B$ to obtain that
\begin{align}
    |(v_{0,k}^A)^T v_{0,k}^B| = 1-\OO(\|\epsilon^A_{0,k}\| +  \|\epsilon^B_{0,k}\|).
\end{align}
This result together with the bounds on $\|\epsilon^A_{0,k}\|,\|\epsilon^B_{0,k}\|$ from Theorem \ref{thm:convergence_symmetric_normalized} concludes the proof.

\end{proof}

\subsection{Proof of Theorem \ref{thm:main_theorem}} 
\label{sec:three_part_proof}
\begin{proof}
    
\textbf{Step 1: } Recall that $V^{B_\theta}$ contains the eigenvectors of $P^B_\tau$, associated with the shared variables. Thus, the matrix $E_1 = Q^{B_\theta} P^B_\tau Q^{B_\theta}$ contains the symmetric projection of $P^B_\tau$ onto the orthogonal complement of its leading eigenvectors $\{v^B_{0,1}, v^B_{0,2}, \ldots, v^B_{0,K}\}$, which eliminates those eigenvectors. 
Due to the eigenvalue structure of the product manifold $\M_B$ (see Eq.~\eqref{eq:eigenvalues_product}), it follows that for a large enough $K$, the leading eigenvector of $E_1$ is thus $v^B_{1,0}$, which is the leading eigenvector that is \textit{not} associated with $\theta$.
\begin{align}
    \argmax_{\|v\|=1} v^TE_1v = v^B_{1,0}.
\end{align}
By Theorem~\ref{thm:convergence_symmetric_normalized} we obtain that as $n \to \infty$,
\begin{equation}\label{first result}
    \bigg\|v_{1,0}^B -  \frac{\alpha}{\sqrt{p n}} \rho_X(f^B_{1,0})\bigg\| = \OO \bigg(\epsilon_n + \sqrt{\frac{\log n}{n \epsilon_n^{d/2+1}}}\bigg).
\end{equation}
\noindent \textbf{Step 2: } To bound the spectral norm $\|E_2 - E_1\|$, we add and subtract $Q^{A_\psi}Q^{B_\theta} P^B_\tau Q^{B_\theta}Q^{A_\psi}$ and then apply the triangle inequality,
\begin{align}\label{eq:term_1_term_2_sep}
\|Q^{A_\psi}Q^{A_\theta} P^B_\tau Q^{A_\theta}Q^{A_\psi}  - Q^{B_\theta} P^B_\tau Q^{B_\theta}\| 
&\leq 
\underbrace{\|Q^{A_\psi}Q^{A_\theta} P^B_\tau Q^{A_\theta}Q^{A_\psi} - Q^{A_\psi}Q^{B_\theta} P^B_\tau Q^{B_\theta}Q^{A_\psi} \|}_\text{\RNum{1}} 
\notag \\
&+
\underbrace{\|Q^{A_\psi}Q^{B_\theta} P^B_\tau Q^{B_\theta} Q^{A_\psi}  - Q^{B_\theta} P^B_\tau Q^{B_\theta}\|}_\text{\RNum{2}}
\end{align}
We begin by bounding \RNum{1},
\begin{align*}
\nonumber \|&Q^{A_\psi}Q^{A_\theta} P^B_\tau Q^{A_\theta}Q^{A_\psi}  - Q^{A_\psi}Q^{B_\theta}P^B_\tau Q^{B_\theta}Q^{A_\psi}\| 
=
\|Q^{A_\psi} ( Q^{A_\theta} P^B_\tau Q^{A_\theta} - Q^{B_\theta} P^B_\tau Q^{B_\theta}) Q^{A_\psi}\|
\\ 
&\leq 
\|Q^{A_\theta} P^B_\tau Q^{A_\theta} - Q^{B_\theta} P^B_\tau Q^{B_\theta}\| 
\leq
4\|P^B_\tau\| \|V^{A_\theta} - V^{B_\theta}\|
\end{align*}
The first inequality follows from the inequality $\|AB\| \le \|A\| \|B\|$ that holds for all operator norms, and 
since $Q^{A_\psi}$ is a projection matrix, we have $\|Q^{A_\psi}\| \le 1$.
The second inequality is proved in Lemma \ref{lem:aux_4}.
Combining lemma \ref{lem:concentration} with Lemma \ref{lem:aux_2} gives
\begin{align}
    \|V^{B_\theta} - V^{A_\theta}\|^2
    = 
    \OO_p\bigg(K\epsilon_n + K\sqrt{\frac{\log n}{n \epsilon_n^{d/2+1}}}\bigg).
\end{align}
This, combined with the bound on the eigenvalues of the operator, $\|P^B\| \leq 1$, yields the following inequality:
\begin{equation}\label{eq:term_1_bound}
\|Q^{A_\psi}Q^{A_\theta} P^B Q^{A_\theta}Q^{A_\psi}  - Q^{A_\psi}Q^{B_\theta} P^B Q^{B_\theta}Q^{A_\psi}\|^2 \leq \OO(K\epsilon_n) + \OO\bigg(K\sqrt{\frac{\log n}{n \epsilon_n^{d/2+1}}}\bigg).
\end{equation}
Next, we bound term \RNum{2} of Eq.~\eqref{eq:term_1_term_2_sep}.  We apply Lemma \ref{lem:aux_3} with the  matrix $Q^{B_\theta}P^B_\tau Q^{B_\theta}$ and projection $Q^{A_\psi}$. The bound $\epsilon$ on the inner products between a vector in the span of $Q^{B_\theta}$ and the eigenvectors of $P^B_\tau$, as required in the Lemma, is given by Lemma \ref{lem:concentration}. This yields
\begin{align}\label{eq:term_2_bound}
    \|
        Q^{A_\psi}Q^{B_\theta} P^B_\tau Q^{B_\theta}Q^{A_\psi}
        &-
        Q^{B_\theta} P^B_\tau Q^{B_\theta}
    \|
    \\
    &=
    \left(
        \sum_{l,k; \lambda_{l,k}^A <\tau}(1-\lambda_{l,k}^A)
    \right)
    \OO\bigg(K\epsilon_n
    +
    K\sqrt{\frac{\log n}{n \epsilon_n^{d/2+1}}}\bigg) \notag
    \\ 
    &=\OO(K^2\epsilon_n)
    +
    \bigg(K^2\sqrt{\frac{\log n}{n \epsilon_n^{d/2+1}}}\bigg). \nonumber
\end{align}
The convergence in Eq. \eqref{eq:term_1_bound} is slower than Eq. \eqref{eq:term_2_bound}. Thus, the overall convergence rate of $\|E_2 -E_1\|^2$ is,
\begin{align}
    \|E_2 -E_1\|^2 =  \OO(K\epsilon_n) +  \OO\bigg(K\sqrt{\frac{\log n}{n \epsilon_n^{d/2+1}}}\bigg)
    \label{e2-e1 result}.
\end{align}

\noindent \textbf{Step 3: } We apply the Davis-Kahan Theorem to the symmetric matrices $E_1$ and $E_2$ and their respective leading eigenvectors $\delta^B$ and $v^B_{1,0}$. According to the theorem, given that $(\delta^B)^T v^B_{1,0} \geq 0$, the following inequality holds \cite[Corollary 1]{yu2015useful}:
\begin{align}
    \| \delta^B - v^B_{1,0}\|^2
    \leq
    \frac{2^{\frac{4}{3}}\|E_2 - E_1\|^2}{\gamma_K^2}
\end{align}
where $\gamma_K$ is the minimal spectral gap, which under assumption (ii) is larger than zero.
Combining the outcome of step 2 (Eq. \eqref{e2-e1 result}) with the Davis-Kahan theorem yields, 
\begin{equation}\label{second result}
    \| \delta^B - v^B_{1,0}\|^2 \leq 
    \OO(K\epsilon_n) +  \OO\bigg(K\sqrt{\frac{\log n}{n \epsilon_n^{d/2+1}}}\bigg)
\end{equation}

Finally, the outcome of step 1 (Eq. \eqref{first result}) and step 3 (Eq.~\eqref{second result}) completes the proof.

\begin{align}
    \| \delta^B - \frac{\alpha}{p\sqrt{n}}\rho_{\pi_2(x)}(f^{(2)}_{1})\|^2
    &=
    \OO
    \left(
        \| \delta^B -  v^B_{1,0}\|^2) + \|v_{1,0}^B -  \tfrac{\alpha}{\sqrt{p n}}\rho_X(f_{1,0})\|^2
    \right) \nonumber\\
    &\leq \OO \bigg( K\epsilon_n  + K\sqrt{\frac{\log n}{n \epsilon_n^{d/2+1}}}\bigg).
\end{align}

\end{proof}

\section{Additional experiments}\label{app:additional_exp}

\subsection{ECG}

We reproduce the experimental framework of ~\citet{shnitzer2019recovering}, for multichannel ECG analysis. In line with their experimental setup, we examine both a synthetic fetal ECG scenario and actual ECG recordings.
In this experiment, we also compare our method to Independent Component Analysis (ICA), which provides a natural baseline for linear signal separation.

\subsubsection{Synthetic ECG Experiment}\label{sec:ECG synthetic}


\paragraph*{Data} 
We follow the preprocess and data generation scheme described in ~\citet[Subsection~6.2]{shnitzer2019recovering} using their data \footnote{\url{https://github.com/shnitzer/Recovering-hidden-components-in-multimodal-data/blob/master/main.m}}. The input consists of two abdominal ECG (ta-mECG) leads,  synthetically generated as a linear mixture of three real ECG recordings, denoted 
\( z_i^{(1)}, z_i^{(2)}, z_i^{(3)} \). 
The recording \( z_i^{(1)} \) is the maternal ECG component and \( z_i^{(2)}, z_i^{(3)} \) are the fetal ECG activities.
We generated $n=10^4$ pairs of mixed signals via,
\begin{equation}
    s_i^{A} = 2z_i^{(1)} - z_i^{(2)}, \qquad
    s_i^{B} = z_i^{(1)} - 0.5z_i^{(3)}.
\end{equation}
Thus, the maternal component is shared between the two signals, while the fetal components are distinct.
Figure~\ref{fig:synthetic_ecg} illustrates an example of the resulting simulated ta-mECG signals.
\begin{figure}[tb]   
   \includegraphics[width=0.95\linewidth]{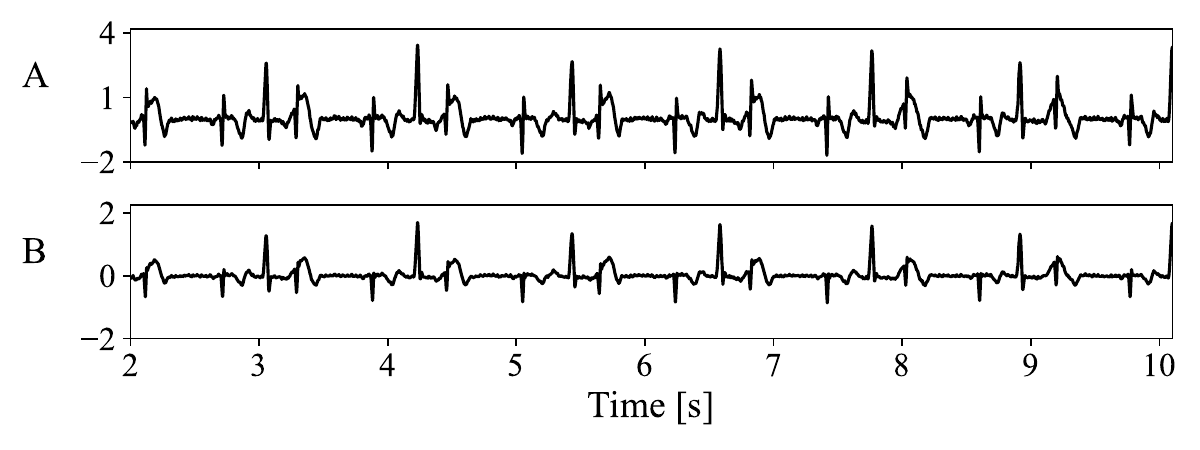}
   \caption{\label{fig:synthetic_ecg} Two simulated ECG leads representing two ta-mECG recordings.}
   
\end{figure}
The two input matrices $X^A$ and $X^B$ are 
computed by a lag map constructed from each signal, using windows of 8 samples with an overlap of 7 samples, resulting in a matrix of size $(T/7-1) \times 8$ for each signal, where $T$ is the number of time samples.  

\paragraph*{Evaluation Metric}

For evaluation, we analyzed the leading vectors computed by each method. 
Each vector was rescaled to the original signal length by assigning the value of its $i$‑th entry to the center of the corresponding $i$‑th window. Predicted vectors were then compared to fetal heartbeats in $z_i^{(3)}$. Since the predictions are continuous and the ground truth is binary, performance was evaluated using precision–recall area under the curve (PR-AUC). This metric summarizes the trade-off between precision and recall by computing precision and recall at all decision thresholds, and the area under the resulting curve provides a single scalar measure of detection performance. 
To account for small timing differences, a tolerance window was applied: a detection within this window of a true heartbeat was considered a true positive, a true heartbeat not detected within the window was counted as a false negative, and any detection outside a tolerance window of a true heartbeat was counted as a false positive. True negatives were counted normally.


\paragraph*{Results}

Figure~\ref{fig:synthetic_ecg_results} shows scatter plots of the outcome of all methods. Each observation is located according to its corresponding elements in the two leading differential vectors computed by each method. 
The plots in the upper row are colored according to one of the fetal ECG signals, and the plots in the second row are colored according to the maternal ECG signals. 
For DELVE, the outcome clearly separates points that correspond to fetal activity from the rest of the data.


Table~\ref{tab:results_synthetic_ecg_tol50_vecs} reports PR-AUC values across different tolerance windows. Both DELVE's vectors and the second vector of the ICA method consistently achieve the highest performance, even at small tolerance values. Shnitzer’s real component improves at higher tolerances, whereas its imaginary component and both FKT vectors show limited or delayed gains.

\begin{table}[ht]
\centering
\scalebox{0.9}{
\begin{tabular}{lrrrr}
\toprule
Tolerance & 0 & 10 & 25 & 50 \\
Method &  &  &  &  \\
\midrule
DELVE (A) & 0.031 & \textbf{0.985} & \textbf{0.985} & \textbf{0.985} \\
DELVE (B) & 0.001 & 0.223 & \textbf{0.956} & \textbf{0.956} \\
Shintzer + (imag) & 0.009 & 0.156 & 0.158 & 0.166 \\
Shintzer + (real) & 0.004 & 0.657 & \textbf{0.914} & \textbf{0.917} \\
FKT (A) & 0.013 & 0.158 & 0.158 & 0.159 \\
FKT (B) & 0.002 & 0.118 & 0.646 & 0.646 \\
ICA (1) & 0.002 & 0.062 & 0.168 & \textbf{0.985} \\
ICA (2) & 0.037 & \textbf{0.985} & \textbf{0.985} & \textbf{0.985} \\

\bottomrule
\end{tabular}

}
\vspace{10pt}  
\caption{\textbf{Synthetic ECG}. Coverage-PR-AUC estimated on different tolerance window sizes, for the leading vectors of each method (higher is better).}
\label{tab:results_synthetic_ecg_tol50_vecs}
\end{table}

\begin{figure}\hspace{-10pt}
   \includegraphics[width=1\linewidth]{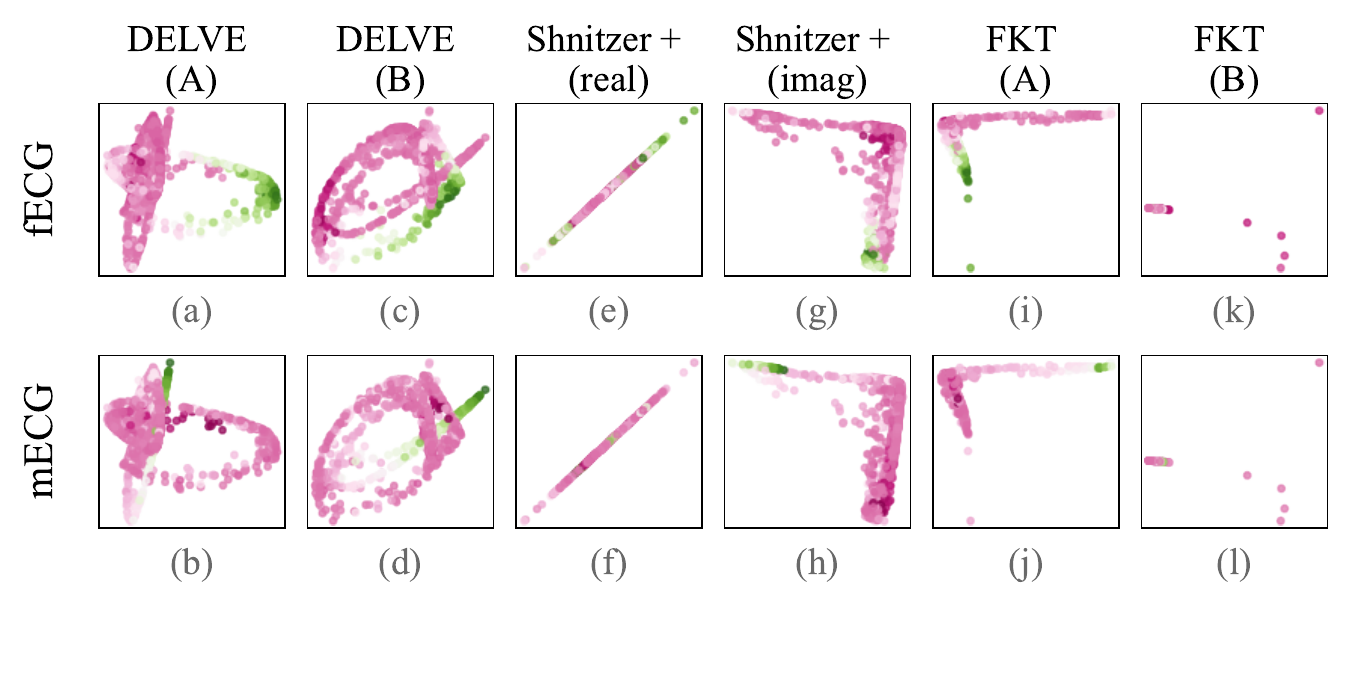}
   \caption{\textbf{Synthetic ECG experiment}. Scatter plots of the two leading vectors from each method and vector (e.g., panels (a), (b) show $\delta^A_0$ vs.\ $\delta^A_1$ and panels (c), (d) show $\delta^B_0$ vs.\ $\delta^B_1$).
    \textbf{Top:} Points colored according to the fetal ECG signal on a continuous scale, with green indicating an elevated signal associated with heartbeats.
    \textbf{Bottom:} Same vectors colored according to the maternal ECG signal.
   }
   \label{fig:synthetic_ecg_results}
\end{figure}

\subsubsection{Real ECG Experiment}

\paragraph*{Data}
We validated physiological data using the QT Database~\citep{laguna1997database,goldberger2000physiobank}. The QTDB contains annotated two-lead ECG signals sampled at 250\~Hz. 
Our preprocessing followed the steps outlined in~\citet[Subsection~6.3]{shnitzer2019recovering}, yielding recordings for 68 valid patients. 
Each record consists of two simultaneously measured ECG leads with expert-verified annotations. 
As in the synthetic example, the matrices $X^A$ and $X^B$ were computed using a lag map constructed from each signal, with windows of 8 samples and an overlap of 7 samples.


\paragraph*{Evaluation metric}


Similar to Section~\ref{sec:ECG synthetic}, each vector was rescaled, and then PR-AUC was computed against the annotated true fetal heartbeats using different tolerance windows.

\paragraph{Results}
Table~\ref{tab:ecg_summary} presents the mean, median, first and third quartiles, and the interquartile range (IQR) of the PR-AUC score using $50_{\text{ms}}$ and $100_{\text{ms}}$ tolerances.
The results for other tolerance values are summarized in boxplots, shown in Figure~\ref{fig: ECG real data one vec}.
DELVE consistently achieves the highest PR-AUC scores along with the second ICA vector. Shnitzer et al.'s operator and the FKT perform slightly lower. 


\begin{figure}\hspace{-5pt}
   \includegraphics[width=14.0cm]{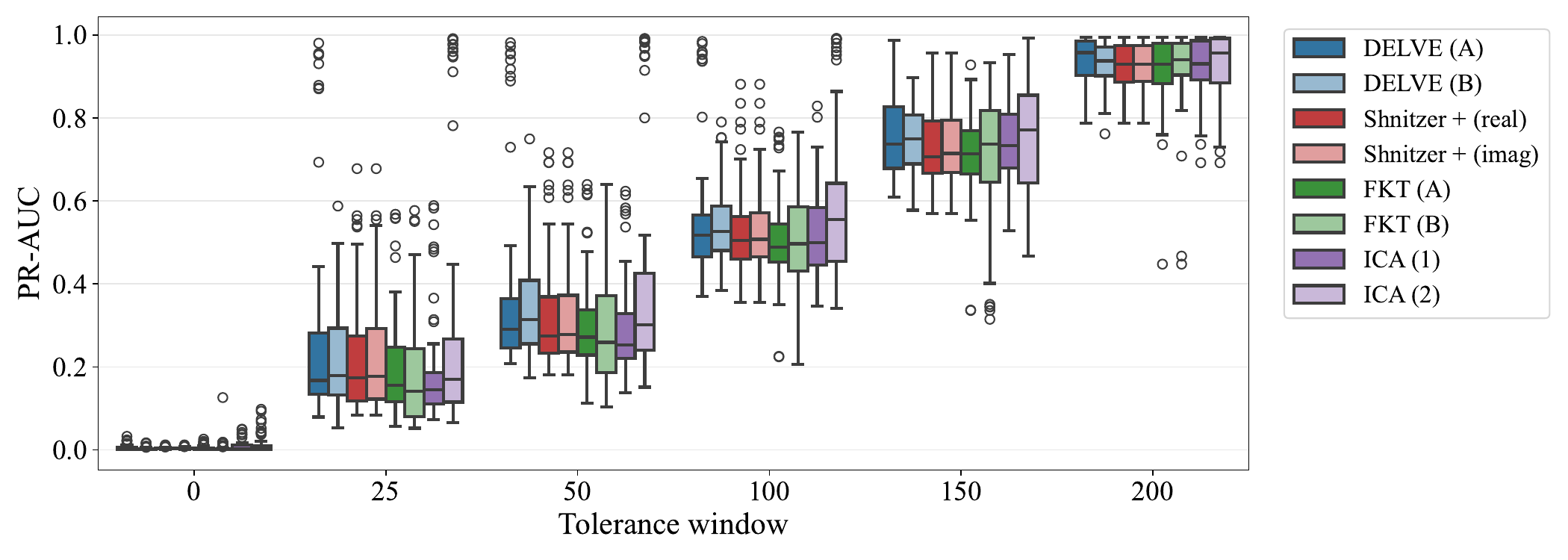}
   \caption{\textbf{ECG real data results}. Box plots showing the PR-AUC results using leading vectors of the different methods, tested on 68 valid subjects, for different tolerance values.}
   \label{fig: ECG real data one vec}
\end{figure}

\begin{table}[ht]
\centering
\scalebox{0.82}{
\begin{tabular}{lrllrrrrr}
\toprule
Tolerance & Method & Vector & Mean & Median & Q25 & Q75 & IQR \\
\midrule
50 & DELVE & A & \textbf{0.376} & \textbf{0.291} & 0.246 & 0.365 & \textbf{0.119} \\
50 & DELVE & B & \textbf{0.350} & \textbf{0.314} & 0.256 & 0.409 & 0.153 \\
50 & Shnitzer + &  imag & 0.331 & 0.278 & 0.236 & 0.373 & 0.137 \\
50 & Shnitzer + & real & 0.326 & 0.275 & 0.234 & 0.369 & 0.135 \\
50 & FKT &  A & 0.302 & 0.272 & 0.229 & 0.338 & \textbf{0.109} \\
50 & FKT &  B & 0.283 & 0.259 & 0.186 & 0.371 & 0.185 \\
50 & ICA &  1 & 0.294 & 0.253 & 0.221 & 0.329 & \textbf{0.108} \\
50 & ICA &  2 & \textbf{0.411} & \textbf{0.301} & 0.241 & 0.426 & 0.185 \\
\midrule
100 & DELVE &  A & \textbf{0.563} & \textbf{0.517} & 0.466 & 0.566 & \textbf{0.100} \\
100 & DELVE &  B & \textbf{0.546} & \textbf{0.526} & 0.481 & 0.588 & 0.108 \\
100 & Shnitzer + &  imag & 0.533 & 0.508 & 0.465 & 0.571 & \textbf{0.106} \\
100 & Shnitzer + & real & 0.527 & 0.505 & 0.460 & 0.562 & \textbf{0.102} \\
100 & FKT &  A & 0.508 & 0.488 & 0.452 & 0.545 & 0.093 \\
100 & FKT &  B & 0.498 & 0.497 & 0.431 & 0.587 & 0.156 \\
100 & ICA &  1 & 0.520 & 0.499 & 0.445 & 0.584 & 0.139 \\
100 & ICA &  2 & \textbf{0.597} & \textbf{0.555} & 0.455 & 0.643 & 0.188 \\
\bottomrule
\end{tabular}

}
\vspace{10pt}  
\caption{\textbf{Real ECG results}. Evaluation of coverage PR-AUC for each method on real QTDB data, based on 68 valid patients and two tolerance window settings. Performance is summarized using the mean, median, and interquartile range (IQR), with higher mean and median values indicating better performance and smaller IQR values indicating improved robustness. }
\label{tab:ecg_summary}

\end{table}

\end{document}